\newtheorem{lemma}{Lemma}
\newtheorem{proposition}{Proposition}
\begin{document}

\title{Global Optimal Closed-Form Solutions for\\Intelligent Surfaces With Mutual Coupling:\\Is Mutual Coupling Detrimental or Beneficial?}

\author{Matteo~Nerini,~\IEEEmembership{Member,~IEEE},
        Hongyu~Li,~\IEEEmembership{Member,~IEEE},
        Bruno~Clerckx,~\IEEEmembership{Fellow,~IEEE}

\thanks{This work has been supported in part by UKRI under Grant EP/Y004086/1, EP/X040569/1, EP/Y037197/1, EP/X04047X/1, EP/Y037243/1. Corresponding author: Bruno Clerckx.}
\thanks{Matteo Nerini and Bruno Clerckx are with the Department of Electrical and Electronic Engineering, Imperial College London, SW7 2AZ London, U.K. (e-mail: m.nerini20@imperial.ac.uk; b.clerckx@imperial.ac.uk).}
\thanks{Bruno Clerckx is also with Kyung Hee University, Seoul, Korea.}
\thanks{Hongyu Li is with the Internet of Things Thrust, The Hong Kong University of Science and Technology (Guangzhou), Guangzhou 511400, China (email: hongyuli@hkust-gz.edu.cn).}}

\maketitle

\begin{abstract}
Reconfigurable Intelligent Surface (RIS) is a breakthrough technology enabling the dynamic control of the propagation environment in wireless communications through programmable surfaces.
To improve the flexibility of conventional diagonal RIS (D-RIS), beyond diagonal RIS (BD-RIS) has emerged as a family of more general RIS architectures.
However, D-RIS and BD-RIS have been commonly explored neglecting mutual coupling effects, while the global optimization of RIS with mutual coupling, its performance limits, and scaling laws remain unexplored.
This study addresses these gaps by deriving global optimal closed-form solutions for BD-RIS with mutual coupling to maximize the channel gain, specifically fully- and tree-connected RISs.
Besides, we provide the expression of the maximum channel gain achievable in the presence of mutual coupling and its scaling law in closed form.
By using the derived scaling laws, we analytically prove that mutual coupling increases the channel gain on average under Rayleigh fading channels.
Our theoretical analysis, confirmed by numerical simulations, shows that both fully- and tree-connected RISs with mutual coupling achieve the same channel gain upper bound when optimized with the proposed global optimal solutions.
Furthermore, we observe that a mutual coupling-unaware optimization of RIS can cause a channel gain degradation of up to 5 dB.
\end{abstract}

\begin{IEEEkeywords}
Beyond diagonal RIS (BD-RIS), multiport network theory, mutual coupling, reconfigurable intelligent surface (RIS).
\end{IEEEkeywords}

\section{Introduction}

Reconfigurable intelligent surface (RIS) is a breakthrough technology that allows dynamic control over the wireless channel in wireless systems \cite{wu21}.
A RIS is a surface made of numerous elements with reconfigurable scattering properties, able to steer the impinging \gls{em} signal toward the intended receiver.
In a conventional RIS architecture, the RIS elements are not interconnected to each other, resulting in RIS being characterized by a diagonal phase shift matrix.
To overcome this limitation of conventional RIS, also denoted as diagonal RIS (D-RIS), novel and more general RIS architectures have emerged under the name of beyond diagonal RIS (BD-RIS) \cite{li23-1}.
The novelty introduced in BD-RIS is the presence of interconnections between the RIS elements, allowing the impinging waves to flow through the surface and the RIS to have a scattering matrix not constrained to be diagonal.

Multiple BD-RIS architectures have been developed to offer advanced flexibility and performance over D-RIS \cite{she20,ner23-1}.
Among them, the fully-connected RIS offers the highest flexibility and performance since each RIS element is connected to all others via tunable impedance components \cite{she20}.
To maintain high performance while decreasing the RIS circuit complexity, the tree-connected RIS has been proposed \cite{ner23-1}, which reduces the required number of tunable impedance components.
Optimal BD-RIS architectures that effectively balance performance and circuit complexity have been investigated in \cite{ner23-3}, where the Pareto frontier of this trade-off has been characterized.
Additionally, BD-RIS has been studied not only for performance enhancement but also to enable full-space coverage, leveraging architectures working in hybrid transmissive and reflective mode \cite{li23-2} and multi-sector mode \cite{li23-3}.
Recent studies show the potential of BD-RIS in enabling novel applications, such as channel reciprocity attacks \cite{li22,wan24}, and prove the superiority of BD-RIS over D-RIS in single- and multi-user systems \cite{ner22,fan24}.

Although the benefits of BD-RIS have been shown from multiple aspects \cite{li23-1,she20,ner23-1,ner23-3,li23-2,li23-3,li22,wan24,ner22,fan24}, previous works are based on idealized assumptions neglecting the impact of \gls{em} mutual coupling between the RIS elements.
Mutual coupling refers to the \gls{em} interaction between the RIS elements.
When a RIS element is excited with a current, it generates an \gls{em} field that can induce currents in neighboring elements, thereby altering their excitation.
These mutual coupling effects at the RIS complicate the expression of the RIS-aided channel \cite{gra21,dir23,ner23-2}.
For this reason, mutual coupling is commonly neglected in the literature on D-RIS and BD-RIS.
Nevertheless, mutual coupling is an intrinsic phenomenon depending on the geometry of the RIS, which can affect the RIS behavior.
More importantly, not properly capturing it in the RIS-aided channel model during the RIS optimization can lead to performance degradation.
Therefore, modeling and managing mutual coupling between the RIS elements is essential for maximizing the benefits of RIS.

Recent works have modeled and optimized RIS-aided systems accounting for mutual coupling.
Most literature focused on optimizing D-RIS with mutual coupling, in single-user systems \cite{qia21,per23} as well as in multi-user systems \cite{abr21,akr23,ma23,wij24}.
The impact of mutual coupling on the \gls{csi} acquisition has been investigated in \cite{zhe24-1,zhe24-3}, while RIS optimization in the presence of mutual coupling and imperfect \gls{csi} has been considered in \cite{pen24}.
In \cite{sem24}, decoupling networks at the RIS array have been proposed as a potential solution for handling mutual coupling.
Furthermore, in \cite{mur23,has24}, RIS-aided channels have been modeled accounting for the mutual coupling between the RIS elements and also the coupling induced by the presence of scattering objects.
These channel models have been validated through \gls{em} simulations and experiments in \cite{pet23} and \cite{zhe24-2}, respectively.
While the works \cite{qia21,per23,abr21,akr23,ma23,wij24,zhe24-1,zhe24-3,pen24,sem24,mur23,has24,pet23,zhe24-2} focused on the modeling and optimization of D-RIS, \cite{li24} analyzed the impact of mutual coupling on BD-RIS architectures.

Despite recent efforts in studying RIS with mutual coupling, two open challenges can be identified.
\textit{First}, existing mutual coupling-aware optimization algorithms suffer from a high computational complexity caused by the high number of iterations required to converge, which becomes prohibitive when the number of RIS elements increases.
\textit{Second}, given their iterative nature, their convergence is guaranteed only to a local optimum.
Thus, an expression of the achievable channel gain of a RIS-aided system with mutual coupling remains unknown.
To solve these two challenges, in this study, we derive global optimal closed-form solutions to maximize the channel gain of BD-RIS-aided systems with mutual coupling, applicable to the fully- and tree-connected RIS architectures.
In addition, we characterize and compare the scaling laws of the channel gain in the presence and in the absence of mutual coupling.
Different from conventional efforts relying on iterative solutions converging to locally optimal solutions, we adopt a different strategy in this paper.
We first reformulate the objective function and the constraints in an equivalent way by introducing auxiliary variables.
We then derive an upper bound on the reformulated objective function and provide a closed-form solution for the optimization variables able to exactly achieve the upper bound.
The contributions of this study can be summarized as follows.

\textit{First}, we globally optimize in closed-form fully-connected RISs to maximize the channel gain in the presence of mutual coupling.
In addition, we provide the expression of the tight channel gain upper bound which is achievable by the proposed solution.
Numerical results are presented to verify the global optimality of the proposed solution.

\textit{Second}, we show that it is also possible to optimize tree-connected RISs through a different global optimal closed-form solution and provide the expression of the achievable channel gain.
We observe that tree-connected RISs achieve the same channel gain upper bound as fully-connected RISs in the presence of mutual coupling, while having a highly reduced circuit complexity.
Numerical results confirm that fully- and tree-connected RISs can be globally optimized in closed-form in the presence of mutual coupling, to exactly achieve the same channel gain upper bound.

\textit{Third}, we derive the scaling laws of the average channel gain obtained by a fully- or tree-connected RIS in the presence and in the absence of mutual coupling, under Rayleigh fading channels.
In the presence of mutual coupling, the scaling law is given as a closed-form function of the mutual coupling, while in the absence of mutual coupling, the scaling law is a function of the RIS antenna self-impedance and the number of RIS elements.
Numerical results show the accuracy of the derived scaling laws even for practical values of the number of RIS elements.

\textit{Fourth}, we assess whether mutual coupling is detrimental or beneficial in improving the channel gain of a RIS-aided system.
To this end, we prove that the derived scaling law of the channel gain under Rayleigh fading channels with mutual coupling is always higher than with no mutual coupling, for any value of the mutual coupling.
Accordingly, we observe that stronger mutual coupling effects enable higher channel gains when they are accounted for by the proposed solutions.

\textit{Organization}:
In Section~\ref{sec:model}, we model a RIS-aided system with multiport network theory.
In Section~\ref{sec:FC}, we derive a global optimal closed-form solution to optimize fully-connected RISs with mutual coupling.
In Section~\ref{sec:TC}, we provide a global optimal closed-form solution for tree-connected RISs with mutual coupling, achieving the same performance as fully-connected RISs.
In Section~\ref{sec:scaling-laws}, we derive the scaling laws of the average channel gains achievable with fully- and tree-connected RISs, in the presence and in the absence of mutual coupling.
In Section~\ref{sec:impact}, we analytically assess the impact of mutual coupling on the average channel gain.
In Section~\ref{sec:results}, we provide numerical results to evaluate the channel gain of a BD-RIS-aided system with mutual coupling.
Finally, Section~\ref{sec:conclusion} concludes this work.

\textit{Notation}:
Vectors and matrices are denoted with bold lower and bold upper letters, respectively.
Scalars are represented with letters not in bold font.
$\Re\{a\}$, $\Im\{a\}$, $\vert a\vert$, $\arg(a)$, and $a^*$ refer to the real part, imaginary part, modulus, phase, and conjugate of a complex scalar $a$, respectively.
$[\mathbf{a}]_{i}$ and $\Vert\mathbf{a}\Vert_2$ refer to the $i$th element and $l_2$-norm of a vector $\mathbf{a}$, respectively.
$\mathbf{A}^{*}$, $\mathbf{A}^{T}$, $\mathbf{A}^{H}$, and $[\mathbf{A}]_{i,j}$ refer to the conjugate, transpose, conjugate transpose, and $(i,j)$th element of a matrix $\mathbf{A}$, respectively.
$\mathbb{R}$ and $\mathbb{C}$ denote the real and complex number sets, respectively.
$j=\sqrt{-1}$ denotes the imaginary unit.
$\mathbf{0}$ and $\mathbf{I}$ denote an all-zero matrix and an identity matrix, respectively, with appropriate dimensions.
$\mathcal{CN}(0,\sigma^2)$ denotes the distribution of a \gls{cscg} random variable whose real and imaginary parts are independent and Gaussian distributed with mean zero and variance $\sigma^2/2$.
$\mathcal{CN}(\mathbf{0},\mathbf{C})$ denotes the distribution of a \gls{cscg} random vector with mean vector $\mathbf{0}$ and covariance matrix $\mathbf{C}$.
diag$(a_{1},\ldots,a_{N})$ refers to a diagonal matrix with diagonal elements being $a_{1},\ldots,a_{N}$.

\section{RIS-Aided Channel Model With\\Mutual Coupling}
\label{sec:model}

Consider a communication system between a single-antenna transmitter and a single-antenna receiver, aided by an $N_I$-element RIS.
This system can be modeled by using the multiport network theory \cite{she20,gra21,ner23-2}, and its wireless channel can be regarded as an $N$-port network, with $N=2+N_I$, as represented in Fig.~\ref{fig:diagram}.
We consider a \gls{siso} system to gain fundamental insights into the role of mutual coupling in RIS-aided wireless systems, as in related literature \cite{qia21,per23,li24}.
Nevertheless, the illustrated channel model can be readily extended to \gls{mimo} systems \cite{ner23-2}.

Following the multiport network theory \cite[Chapter 4]{poz11}, the $N$-port network representing the wireless channel is fully characterized by its impedance matrix $\mathbf{Z}\in\mathbb{C}^{N\times N}$, which can be partitioned as
\begin{equation}
\mathbf{Z}=
\begin{bmatrix}
z_{TT} & \mathbf{z}_{TI} & z_{TR}\\
\mathbf{z}_{IT} & \mathbf{Z}_{II} & \mathbf{z}_{IR}\\
z_{RT} & \mathbf{z}_{RI} & z_{RR}
\end{bmatrix}.\label{eq:Z}
\end{equation}
In \eqref{eq:Z}, $z_{TT}\in\mathbb{C}$ and $z_{RR}\in\mathbb{C}$ are the self-impedance of the antenna at the transmitter and receiver, respectively, and $\mathbf{Z}_{II}\in\mathbb{C}^{N_I\times N_I}$  refers to the impedance matrix of the antenna array at the RIS, whose diagonal entries represent the self-impedances of the RIS antennas while the off-diagonal entries represent the \gls{em} mutual coupling between the RIS antennas.
In addition, $\mathbf{z}_{IT}\in\mathbb{C}^{N_I\times 1}$, $\mathbf{z}_{RI}\in\mathbb{C}^{1\times N_I}$, and $z_{RT}\in\mathbb{C}$ are the transmission impedance matrices from the transmitter to RIS, from the RIS to receiver, and from the transmitter to receiver, respectively.
Similarly, $\mathbf{z}_{TI}\in\mathbb{C}^{1\times N_I}$, $\mathbf{z}_{IR}\in\mathbb{C}^{N_I\times 1}$, and $z_{TR}\in\mathbb{C}$ refer to the transmission impedance matrices from the RIS to transmitter, from the receiver to RIS, and from the receiver to transmitter, respectively.
In the case of reciprocal wireless channels, we have $\mathbf{z}_{TI}=\mathbf{z}_{IT}^T$, $\mathbf{z}_{IR}=\mathbf{z}_{RI}^T$, and $z_{TR}=z_{RT}$.

\begin{figure}[t]
\centering
\includegraphics[width=0.48\textwidth]{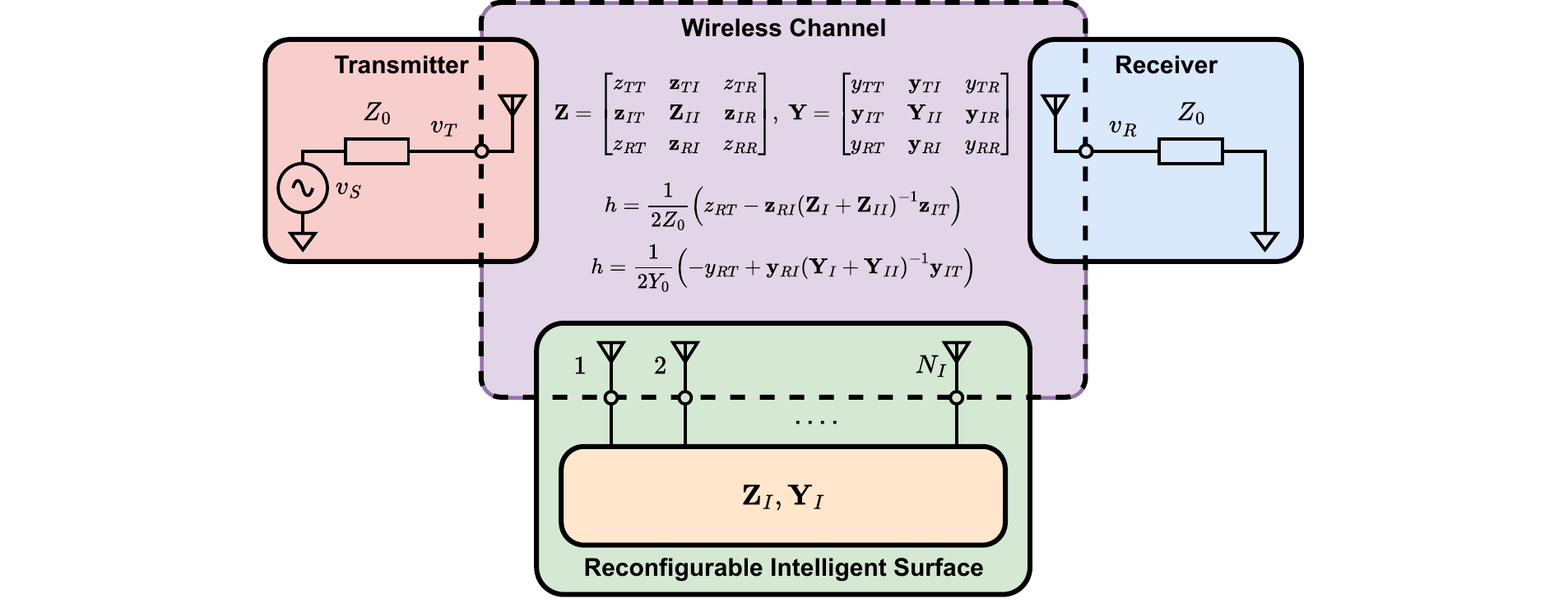}
\caption{RIS-aided system modeled with multiport network theory.}
\label{fig:diagram}
\end{figure}

The antenna at the transmitter is connected to a source voltage $v_{S}\in\mathbb{C}$ with series impedance $Z_0$, e.g., set to $Z_0=50\;\Omega$, and we denote the voltage at the transmitting antenna as $v_{T}\in\mathbb{C}$.
The antenna at the receiver is connected to a load impedance $Z_0$, and we denote the voltage at the receiving antenna as $v_{R}\in\mathbb{C}$.
At the RIS, the $N_I$ antennas are connected to an $N_I$-port reconfigurable impedance network with impedance matrix denoted as $\mathbf{Z}_{I}\in\mathbb{C}^{N_I\times N_I}$.
The impedance matrix $\mathbf{Z}_{I}$ is reconfigurable, and is a diagonal matrix for D-RIS, while it is generally not constrained to be diagonal for BD-RIS architectures.

To obtain a tractable expression of the channel $h\in\mathbb{C}$ relating the voltage $v_T$ (transmitted signal) and the voltage $v_R$ (received signal) through
\begin{equation}
v_R=hv_T+n,
\end{equation}
where $n$ is the \gls{awgn}, we make the following two assumptions, commonly considered in related literature \cite{she20,gra21,ner23-2}.
First, the transmission distances from the transmitter to RIS, from the RIS to receiver, and from the transmitter to receiver are assumed to be large enough such that we can neglect the effect of the feedback channels, i.e., we can consider $\mathbf{z}_{TI}=\mathbf{0}$, $\mathbf{z}_{IR}=\mathbf{0}$, and $z_{TR}=0$, which is also known as the unilateral approximation \cite{ivr10}.
Second, the antennas at the transmitter and receiver are assumed to be perfectly matched to $Z_0$, i.e., $z_{TT}=Z_0$ and $z_{RR}=Z_0$.
With these two assumptions, it is possible to show that the channel $h$ based on the $Z$-parameters representation reads as
\begin{equation}
h=\frac{1}{2Z_0}\left(z_{RT}-\mathbf{z}_{RI}\left(\mathbf{Z}_I+\mathbf{Z}_{II}\right)^{-1}\mathbf{z}_{IT}\right),\label{eq:h-Z}
\end{equation}
as derived in \cite{gra21,ner23-2}.

In addition to the $Z$-parameters, multiport network theory offers the $Y$-parameters as an equivalent representation for microwave multiport systems \cite[Chapter 4]{poz11}.
Based on this representation, the $N$-port network representing the wireless channel in Fig.~\ref{fig:diagram} is characterized by its admitacnce matrix $\mathbf{Y}=\mathbf{Z}^{-1}$, partitioned as
\begin{equation}
\mathbf{Y}=
\begin{bmatrix}
y_{TT} & \mathbf{y}_{TI} & y_{TR}\\
\mathbf{y}_{IT} & \mathbf{Y}_{II} & \mathbf{y}_{IR}\\
y_{RT} & \mathbf{y}_{RI} & y_{RR}
\end{bmatrix},\label{eq:Y}
\end{equation}
similarly to $\mathbf{Z}$ in \eqref{eq:Z}.
Considering the unilateral approximation and perfect matching at the transmitter and receiver, it has been shown in \cite{ner23-2} that the transmission admittance matrices $\mathbf{y}_{RI}\in\mathbb{C}^{1\times N_I}$, $\mathbf{y}_{IT}\in\mathbb{C}^{N_I\times 1}$, and $y_{RT}\in\mathbb{C}$ are given by
\begin{gather}
\mathbf{y}_{RI}=-\frac{\mathbf{z}_{RI}\mathbf{Z}_{II}^{-1}}{Z_0},\;\mathbf{y}_{IT}=-\frac{\mathbf{Z}_{II}^{-1}\mathbf{z}_{IT}}{Z_0},\label{eq:yRI-yIT}\\
y_{RT}=\frac{1}{Z_0^2}\left(-z_{RT}+\mathbf{z}_{RI}\mathbf{Z}_{II}^{-1}\mathbf{z}_{IT}\right),\label{eq:yRT}
\end{gather}
and the corresponding channel model reads as
\begin{equation}
h=\frac{1}{2Y_0}\left(-y_{RT}+\mathbf{y}_{RI}\left(\mathbf{Y}_I+\mathbf{Y}_{II}\right)^{-1}\mathbf{y}_{IT}\right),\label{eq:h-Y}
\end{equation}
where $Y_0=Z_0^{-1}$, $\mathbf{Y}_{I}=\mathbf{Z}_{I}^{-1}$, $\mathbf{Y}_{II}=\mathbf{Z}_{II}^{-1}$.
Although \eqref{eq:h-Y} is equal to \eqref{eq:h-Z} given the equivalence between $Z$- and $Y$-parameters (as shown in \cite{ner23-2}), the two models are best suited for their respective use cases.
Specifically, the $Z$-parameters model in \eqref{eq:h-Z} is suitable for fully-connected RISs with a full matrix $\mathbf{Z}_I$.
For the sake of illustration, we report the diagram of a $4$-element fully-connected RIS in Fig.~\ref{fig:fully-tree}(a), showing that each antenna port is connected to ground and to all other ports through a tunable impedance \cite{she20}.
Meanwhile, \eqref{eq:h-Y} is suitable for tree-connected RISs, where the admittance matrix $\mathbf{Y}_I$ can explicitly capture the specific circuit topology of such RIS architectures.
The diagram of a $4$-element tree-connected RIS is shown in Fig.~\ref{fig:fully-tree}(b), where the antenna ports are interconnected to each other such that the resulting topology can be represented with an acyclic and connected graph, i.e., a tree \cite{ner23-1}.
In the following sections, we derive global optimal solutions to maximize the channel gain for fully- and tree-connected RISs, using the $Z$- and $Y$-parameters representations, respectively.
The channel gain maximization problem is considered since it is equivalent to maximizing the \gls{snr} at the receiver.

\section{Optimizing Fully-Connected RIS With\\Mutual Coupling}
\label{sec:FC}

In this section, we provide a global optimal closed-form solution for the channel gain maximization problem in the presence of a lossless and reciprocal fully-connected RIS.
To this end, we formulate the optimization problem based on the $Z$-parameters representation.

\begin{figure}[t]
\centering
\includegraphics[width=0.46\textwidth]{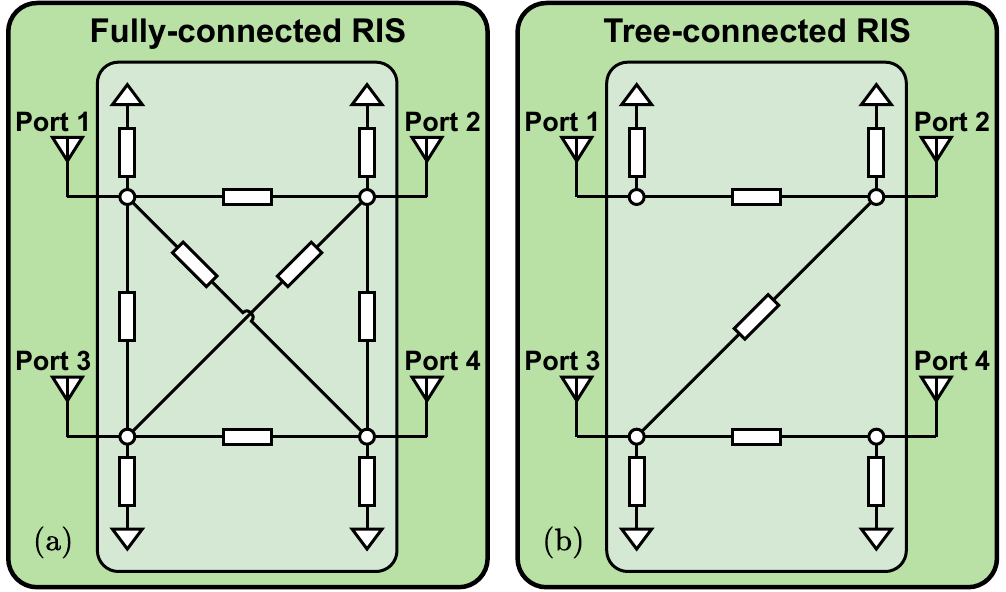}
\caption{Circuit topology of a (a) fully-connected RIS and (b) tree-connected RIS, with $N_I=4$ elements.}
\label{fig:fully-tree}
\end{figure}

\subsection{Problem Formulation and Global Optimal Solution}

For a lossless fully-connected RIS, $\mathbf{Z}_I$ is a purely imaginary matrix, i.e., $\mathbf{Z}_I=j\mathbf{X}_{I}$, where $\mathbf{X}_{I}\in\mathbb{R}^{N_I\times N_I}$ is the reactance matrix of the RIS.
Besides, the reciprocity of the microwave network implementing the fully-connected RIS imposes that $\mathbf{X}_I$ is symmetric, i.e., $\mathbf{X}_{I}=\mathbf{X}_{I}^{T}$.
Given these constraints, the channel gain $\vert h\vert^2$ maximization problem is given by
\begin{align}
\underset{\mathbf{X}_{I}}{\mathsf{\mathrm{max}}}\;\;
&\frac{1}{4Z_0^2}\left\vert z_{RT}-\mathbf{z}_{RI}\left(j\mathbf{X}_{I}+\mathbf{Z}_{II}\right)^{-1}\mathbf{z}_{IT}\right\vert^{2}\label{eq:p1-FC-1}\\
\mathsf{\mathrm{s.t.}}\;\;\;
&\mathbf{X}_{I}=\mathbf{X}_{I}^{T},\label{eq:p1-FC-2}
\end{align}
by using the channel model in \eqref{eq:h-Z}.
To solve \eqref{eq:p1-FC-1}-\eqref{eq:p1-FC-2}, we assume that the mutual coupling matrix $\mathbf{Z}_{II}$ is known.
This matrix depends purely on the array geometry, i.e., antenna element type and inter-element distances, and is independent of the channels between the RIS and the transmitter and receiver.
Thus, it can be estimated offline, before the RIS is deployed.
We also assume perfect \gls{csi}, which can be acquired using the channel estimation protocols developed in previous literature \cite{zhe24-1,zhe24-3}, suitably adapted for BD-RIS.

A global optimal closed-form solution for this channel gain maximization problem without mutual coupling at the RIS, i.e., with $\mathbf{Z}_{II}=Z_0\mathbf{I}$, has been proposed in \cite{ner22}.
Thus, we globally solve \eqref{eq:p1-FC-1}-\eqref{eq:p1-FC-2} by ``diagonalizing'' $\mathbf{Z}_{II}$ such that the solution in \cite{ner22} can be directly adopted.
To this end, we first introduce the following result giving a useful property of the mutual coupling matrix $\mathbf{Z}_{II}$.

\begin{proposition}
For a reciprocal and lossy $N$-port network with impedance matrix $\mathbf{Z}\in\mathbb{C}^{N\times N}$ and admittance matrix $\mathbf{Y}=\mathbf{Z}^{-1}$, the matrices $\Re\{\mathbf{Z}\}$ and $\Re\{\mathbf{Y}\}$ are positive semi-definite.
\label{pro:semi-definite}
\end{proposition}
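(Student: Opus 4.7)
The plan is to invoke the standard energy-conservation argument for passive multiport networks and then use reciprocity to turn the Hermitian part of $\mathbf{Z}$ into the element-wise real part $\Re\{\mathbf{Z}\}$.

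First I would let $\mathbf{i},\mathbf{v}\in\mathbb{C}^{N}$ denote the phasors of the port currents and voltages, related by $\mathbf{v}=\mathbf{Z}\mathbf{i}$. Following Pozar \cite{poz11}, the time-averaged power delivered to the $N$-port network is
\begin{equation}
P=\tfrac{1}{2}\Re\{\mathbf{v}^{H}\mathbf{i}\}=\tfrac{1}{2}\Re\{\mathbf{i}^{H}\mathbf{Z}\mathbf{i}\}. \nonumber
\end{equation}
Because the network is assumed lossy (passive), $P\geq 0$ for every excitation $\mathbf{i}\in\mathbb{C}^{N}$. Using the elementary identity $\Re\{\mathbf{i}^{H}\mathbf{Z}\mathbf{i}\}=\mathbf{i}^{H}(\mathbf{Z}+\mathbf{Z}^{H})\mathbf{i}/2$ (which follows from $\mathbf{i}^{H}\mathbf{Z}\mathbf{i}$ and its conjugate $\mathbf{i}^{H}\mathbf{Z}^{H}\mathbf{i}$ summing to twice its real part), I conclude that the Hermitian part $(\mathbf{Z}+\mathbf{Z}^{H})/2$ is positive semi-definite.

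Next I would bring in reciprocity, $\mathbf{Z}=\mathbf{Z}^{T}$, which gives $\mathbf{Z}^{H}=\mathbf{Z}^{*}$ and therefore $(\mathbf{Z}+\mathbf{Z}^{H})/2=(\mathbf{Z}+\mathbf{Z}^{*})/2=\Re\{\mathbf{Z}\}$. Combining this identification with the previous step yields that $\Re\{\mathbf{Z}\}$ is positive semi-definite, as claimed.

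The proof for $\Re\{\mathbf{Y}\}$ proceeds along exactly the same lines, now excited by voltages: substituting $\mathbf{i}=\mathbf{Y}\mathbf{v}$ into the power expression gives $P=\tfrac{1}{2}\Re\{\mathbf{v}^{H}\mathbf{Y}\mathbf{v}\}\geq 0$ for all $\mathbf{v}\in\mathbb{C}^{N}$, so the Hermitian part of $\mathbf{Y}$ is positive semi-definite. Reciprocity transfers from $\mathbf{Z}$ to $\mathbf{Y}$ through $\mathbf{Y}^{T}=(\mathbf{Z}^{-1})^{T}=(\mathbf{Z}^{T})^{-1}=\mathbf{Z}^{-1}=\mathbf{Y}$, so the same collapse of the Hermitian part onto $\Re\{\mathbf{Y}\}$ holds, giving the second half of the statement. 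There is no genuine obstacle here; the only point requiring care is the reciprocity step, because without $\mathbf{Z}=\mathbf{Z}^{T}$ the passivity inequality only constrains the Hermitian part and not the element-wise real part appearing in the proposition.
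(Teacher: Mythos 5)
Your proof is correct and follows essentially the same route as the paper: both deduce positive semi-definiteness from the non-negativity of the dissipated power $P=\tfrac{1}{2}\Re\{\mathbf{i}^{H}\mathbf{Z}\mathbf{i}\}$ together with $\mathbf{v}=\mathbf{Z}\mathbf{i}$ (and $\mathbf{i}=\mathbf{Y}\mathbf{v}$ for the admittance case), using reciprocity to turn the quadratic form into one involving $\Re\{\mathbf{Z}\}$. The only difference is cosmetic: you collapse the Hermitian part $(\mathbf{Z}+\mathbf{Z}^{H})/2$ onto $\Re\{\mathbf{Z}\}$ via $\mathbf{Z}=\mathbf{Z}^{T}$ in one matrix identity, whereas the paper reaches $P=\tfrac{1}{2}\mathbf{i}^{T}\Re\{\mathbf{Z}\}\mathbf{i}^{*}$ by an element-wise expansion exploiting the same symmetry.
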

\begin{proof}Please refer to Appendix~A.\end{proof}

By exploiting Proposition~\ref{pro:semi-definite} and assuming $\Re\{\mathbf{Z}_{II}\}$ to be invertible, we define the auxiliary variable $\bar{\mathbf{X}}_I\in\mathbb{R}^{N_I\times N_I}$ as
\begin{equation}
\bar{\mathbf{X}}_{I}=Z_0\Re\{\mathbf{Z}_{II}\}^{-1/2}(\mathbf{X}_{I}+\Im\{\mathbf{Z}_{II}\})\Re\{\mathbf{Z}_{II}\}^{-1/2},\label{eq:XI-bar}
\end{equation}
which is a real matrix since $\mathbf{X}_{I}$, $\Im\{\mathbf{Z}_{II}\}$, and $\Re\{\mathbf{Z}_{II}\}^{-1/2}$ are real matrices.
Note that $\Re\{\mathbf{Z}_{II}\}^{-1/2}$ is a real matrix since $\Re\{\mathbf{Z}_{II}\}$ is positive semi-definite following Proposition~\ref{pro:semi-definite} and its invertibility.
The rationale behind this auxiliary variable is that it allows to ``diagonalize'' the mutual coupling, similar to the effect of the decoupling network used in \cite{sem24}.
Thus, by substituting
\begin{equation}
\mathbf{X}_{I}=\frac{1}{Z_0}\Re\{\mathbf{Z}_{II}\}^{1/2}\bar{\mathbf{X}}_{I}\Re\{\mathbf{Z}_{II}\}^{1/2}-\Im\{\mathbf{Z}_{II}\},\label{eq:XI}
\end{equation}
which follows from \eqref{eq:XI-bar}, into \eqref{eq:p1-FC-1}, problem \eqref{eq:p1-FC-1}-\eqref{eq:p1-FC-2} can be equivalently rewritten as
\begin{align}
\underset{\mathbf{X}_{I}}{\mathsf{\mathrm{max}}}\;\;
&\frac{1}{4Z_0^2}\left\vert z_{RT}-\mathbf{z}_{RI}\Re\{\mathbf{Z}_{II}\}^{-1/2}\sqrt{Z_0}\right.\notag\\
&\left.\times\left(j\bar{\mathbf{X}}_{I}+Z_0\mathbf{I}\right)^{-1}\sqrt{Z_0}\Re\{\mathbf{Z}_{II}\}^{-1/2}\mathbf{z}_{IT}\right\vert^{2}\label{eq:p2-FC-1}\\
\mathsf{\mathrm{s.t.}}\;\;\;
&\eqref{eq:XI-bar},\;
\mathbf{X}_{I}=\mathbf{X}_{I}^{T}.\label{eq:p2-FC-2}
\end{align}
Remarkably, constraint \eqref{eq:p2-FC-2} indicates that  $\bar{\mathbf{X}}_{I}$ can be an arbitrary symmetric matrix since $\mathbf{Z}_{II}$ is symmetric following the reciprocity of the mutual coupling effects.
Thus, we can solve problem \eqref{eq:p2-FC-1}-\eqref{eq:p2-FC-2} for $\bar{\mathbf{X}}_{I}$ and then find the optimal $\mathbf{X}_{I}$ through \eqref{eq:XI}.
To this end, we equivalently rewrite problem \eqref{eq:p2-FC-1}-\eqref{eq:p2-FC-2} as
\begin{align}
\underset{\bar{\mathbf{X}}_{I}}{\mathsf{\mathrm{max}}}\;\;
&\frac{1}{4Z_0^2}\left\vert z_{RT}-\bar{\mathbf{z}}_{RI}\left(j\bar{\mathbf{X}}_{I}+Z_0\mathbf{I}\right)^{-1}\bar{\mathbf{z}}_{IT}\right\vert^{2}\label{eq:p3-FC-1}\\
\mathsf{\mathrm{s.t.}}\;\;\;
&\bar{\mathbf{X}}_{I}=\bar{\mathbf{X}}_{I}^{T},\label{eq:p3-FC-2}
\end{align}
where we introduced $\bar{\mathbf{z}}_{RI}\in\mathbb{C}^{1\times N_I}$ and $\bar{\mathbf{z}}_{IT}\in\mathbb{C}^{N_I\times 1}$ as
\begin{align}
\bar{\mathbf{z}}_{RI}&=\mathbf{z}_{RI}\Re\{\mathbf{Z}_{II}\}^{-1/2}\sqrt{Z_0},\label{eq:zRI-bar}\\
\bar{\mathbf{z}}_{IT}&=\sqrt{Z_0}\Re\{\mathbf{Z}_{II}\}^{-1/2}\mathbf{z}_{IT}.\label{eq:zIT-bar}
\end{align}
Interestingly, \eqref{eq:p3-FC-1} is equivalent to the channel expression in the presence of a RIS with reactance matrix $\bar{\mathbf{X}}_{I}$ and no mutual coupling, and a similar expression has been obtained through the use of a decoupling network in \cite{sem24}.
To solve problem \eqref{eq:p3-FC-1}-\eqref{eq:p3-FC-2} with the solution proposed in \cite{ner22}, we map the $Z$-parameter based channels into the $S$-parameter based ones \cite{ner23-2} by defining
\begin{gather}
\bar{\mathbf{s}}_{RI}=\frac{\bar{\mathbf{z}}_{RI}}{2Z_0},\;
\bar{\mathbf{s}}_{IT}=\frac{\bar{\mathbf{z}}_{IT}}{2Z_0},\label{eq:sRI-sIT-bar-FC}\\
\bar{s}_{RT}=\frac{1}{2Z_0}\left(z_{RT}-\frac{\bar{\mathbf{z}}_{RI}\bar{\mathbf{z}}_{IT}}{2Z_0}\right),\label{eq:sRT-bar-FC}\\
\bar{\boldsymbol{\Theta}}=\left(j\bar{\mathbf{X}}_I+Z_0\mathbf{I}\right)^{-1}\left(j\bar{\mathbf{X}}_I-Z_0\mathbf{I}\right),\label{eq:T-bar-FC}
\end{gather}
and equivalently rewrite \eqref{eq:p3-FC-1}-\eqref{eq:p3-FC-2} as
\begin{align}
\underset{\bar{\mathbf{X}}_{I}}{\mathsf{\mathrm{max}}}\;\;
&\left\vert \bar{s}_{RT}+\bar{\mathbf{s}}_{RI}\bar{\boldsymbol{\Theta}}\bar{\mathbf{s}}_{IT}\right\vert^{2}\label{eq:p4-FC-1}\\
\mathsf{\mathrm{s.t.}}\;\;\;
&\eqref{eq:T-bar-FC},\;
\bar{\mathbf{X}}_{I}=\bar{\mathbf{X}}_{I}^{T}.\label{eq:p4-FC-2}
\end{align}
Noticing that constraint \eqref{eq:p4-FC-2} implies that  $\bar{\boldsymbol{\Theta}}$ can be an arbitrary unitary and symmetric matrix (since $\bar{\mathbf{X}}_{I}$ is a real matrix given the positive-definiteness of $\Re\{\mathbf{Z}_{II}\}$), we can equivalently transform \eqref{eq:p4-FC-1}-\eqref{eq:p4-FC-2} into
\begin{align}
\underset{\bar{\boldsymbol{\Theta}}}{\mathsf{\mathrm{max}}}\;\;
&\left\vert \bar{s}_{RT}+\bar{\mathbf{s}}_{RI}\bar{\boldsymbol{\Theta}}\bar{\mathbf{s}}_{IT}\right\vert^{2}\label{eq:p5-FC-1}\\
\mathsf{\mathrm{s.t.}}\;\;\;
&\bar{\boldsymbol{\Theta}}^H\bar{\boldsymbol{\Theta}}=\mathbf{I},\;\bar{\boldsymbol{\Theta}}=\bar{\boldsymbol{\Theta}}^{T},\label{eq:p5-FC-2}
\end{align}
and compute $\bar{\mathbf{X}}_{I}$ based on the optimal $\bar{\boldsymbol{\Theta}}$ by inverting \eqref{eq:T-bar-FC}.

From problem \eqref{eq:p5-FC-1}-\eqref{eq:p5-FC-2}, the global optimal reactance matrix of the RIS $\mathbf{X}_I$ can be found according to the following three steps.
\textit{First}, problem \eqref{eq:p5-FC-1}-\eqref{eq:p5-FC-2} is globally solved through the global optimal closed-form solution for $\bar{\boldsymbol{\Theta}}$ proposed in \cite{ner22}, returning $\bar{\boldsymbol{\Theta}}^\star$.
\textit{Second}, the global optimal $\bar{\mathbf{X}}_{I}$, denoted as $\bar{\mathbf{X}}_{I}^\star$, is found from $\bar{\boldsymbol{\Theta}}^\star$ by inverting \eqref{eq:T-bar-FC}, i.e.,
\begin{equation}
\bar{\mathbf{X}}_{I}^\star=-jZ_0(\mathbf{I}+\bar{\boldsymbol{\Theta}}^\star)(\mathbf{I}-\bar{\boldsymbol{\Theta}}^\star)^{-1}.
\end{equation}
\textit{Third}, the global optimal $\mathbf{X}_{I}$ is obtained from $\bar{\mathbf{X}}_{I}^\star$ via \eqref{eq:XI}.
The computational complexity of optimizing a fully-connected RIS is driven by the complexity of solving problem \eqref{eq:p5-FC-1}-\eqref{eq:p5-FC-2} with the solution proposed in \cite{ner22}.
Since the solution in \cite{ner22} requires the computation of the eigenvalue decomposition of an $N_I\times N_I$ matrix, its complexity scales with $\mathcal{O}(N_I^3)$.

\subsection{Channel Gain Upper Bound}

The proposed solution is proved to be global optimal since it can exactly achieve the upper bound on the channel gain $\vert \bar{s}_{RT}+\bar{\mathbf{s}}_{RI}\bar{\boldsymbol{\Theta}}\bar{\mathbf{s}}_{IT}\vert^{2}$, which is given by
\begin{equation}
\left\vert h^\star\right\vert^2
=\left(\left\vert \bar{s}_{RT}\right\vert+\left\Vert\bar{\mathbf{s}}_{RI}\right\Vert_2\left\Vert\bar{\mathbf{s}}_{IT}\right\Vert_2\right)^{2},\label{eq:UB-FC-1}
\end{equation}
following the triangle inequality, the sub multiplicity property of the norm, and that $\Vert\bar{\boldsymbol{\Theta}}\Vert_2=1$ for any unitary $\bar{\boldsymbol{\Theta}}$.
To write this upper bound as a function of the channels $z_{RT}$, $\mathbf{z}_{RI}$, and $\mathbf{z}_{IT}$, and the mutual coupling matrix $\mathbf{Z}_{II}$, we substitute \eqref{eq:sRI-sIT-bar-FC} and \eqref{eq:sRT-bar-FC} into \eqref{eq:UB-FC-1} to obtain
\begin{equation}
\left\vert h^\star\right\vert^2
=\frac{1}{4Z_0^2}
\left(\left\vert z_{RT}-\frac{\bar{\mathbf{z}}_{RI}\bar{\mathbf{z}}_{IT}}{2Z_0}\right\vert+\frac{\left\Vert\bar{\mathbf{z}}_{RI}\right\Vert_2\left\Vert\bar{\mathbf{z}}_{IT}\right\Vert_2}{2Z_0}\right)^{2},\label{eq:UB-FC-2}
\end{equation}
and substitute \eqref{eq:zRI-bar} and \eqref{eq:zIT-bar} into \eqref{eq:UB-FC-2} to get
\begin{multline}
\left\vert h^\star\right\vert^2
=\frac{1}{4Z_0^2}
\left(\left\vert z_{RT}-\frac{1}{2}\mathbf{z}_{RI}\Re\{\mathbf{Z}_{II}\}^{-1}\mathbf{z}_{IT}\right\vert\right.\\
\left.+\frac{1}{2}\left\Vert\mathbf{z}_{RI}\Re\{\mathbf{Z}_{II}\}^{-1/2}\right\Vert_2\left\Vert\Re\{\mathbf{Z}_{II}\}^{-1/2}\mathbf{z}_{IT}\right\Vert_2\right)^{2},\label{eq:UB-FC-3}
\end{multline}
being an explicit function of the channels and the mutual coupling.
In \eqref{eq:UB-FC-3}, the term $\vert z_{RT}-\mathbf{z}_{RI}\Re\{\mathbf{Z}_{II}\}^{-1}\mathbf{z}_{IT}/2\vert$ represents the strength of the direct link $z_{RT}$ and the structural scattering (or specular reflection) of the RIS $-\mathbf{z}_{RI}\Re\{\mathbf{Z}_{II}\}^{-1}\mathbf{z}_{IT}/2$.
In addition, $\Vert\mathbf{z}_{RI}\Re\{\mathbf{Z}_{II}\}^{-1/2}\Vert_2$ and $\Vert\Re\{\mathbf{Z}_{II}\}^{-1/2}\mathbf{z}_{IT}\Vert_2$ are the strengths of the effective RIS-receiver and transmitter-RIS channels, respectively.
Note that \eqref{eq:UB-FC-3} is an explicit expression of the maximum channel gain achievable by a fully-connected RIS with mutual coupling, which is crucial to conducting performance analysis of RIS-aided systems with mutual coupling.

The fully-connected RIS is the most flexible BD-RIS architecture, enabling the highest performance at the cost of a higher circuit complexity.
Specifically, in a fully-connected RIS, there are $N_I(N_I+1)/2$ tunable impedance components interconnecting all the RIS elements to each other and to ground \cite{she20}.
To decrease the circuit complexity, the tree-connected RIS has been proposed, which is the least complex BD-RIS architecture achieving the same performance as the fully-connected RIS without mutual coupling, while including only $2N_I-1$ tunable impedance components \cite{ner23-1}.
In the following, we show that tree-connected RIS can achieve the same performance as the fully-connected RIS also in the presence of mutual coupling.

\section{Optimizing Tree-Connected RIS With\\Mutual Coupling}
\label{sec:TC}

In this section, we provide a global optimal closed-form solution for the channel gain maximization problem in the presence of a lossless and reciprocal tree-connected RIS.
To this end, we exploit the $Y$-parameters channel representation since the constraints for the tree-connected RIS are captured on its admittance matrix $\mathbf{Y}_{I}$, as explained in \cite{ner23-1}.

\subsection{Problem Formulation and Global Optimal Solution}

For a lossless tree-connected RIS, its admittance matrix $\mathbf{Y}_I$ is purely imaginary, i.e., $\mathbf{Y}_I=j\mathbf{B}_{I}$, where $\mathbf{B}_{I}\in\mathbb{R}^{N_I\times N_I}$ is the susceptance matrix of the RIS.
For a reciprocal RIS, $\mathbf{B}_I$ is symmetric, i.e., $\mathbf{B}_{I}=\mathbf{B}_{I}^{T}$.
In addition, $\mathbf{B}_I$ is also subject to the constraints given by the specific tree-connected RIS architecture considered.
Since numerous tree-connected RIS architectures are possible, we consider the tridiagonal RIS architecture in the following, having $[\mathbf{B}_{I}]_{i,j}=\mathbf{0}$ if $\left\vert i-j\right\vert>1$ \cite{ner23-1}.
Nevertheless, the following discussion is valid for every tree-connected RIS architecture.
For a tridiagonal RIS, the channel gain $\vert h\vert^2$ maximization problem writes as
\begin{align}
\underset{\mathbf{B}_{I}}{\mathsf{\mathrm{max}}}\;\;
&\frac{1}{4Y_0^2}\left\vert y_{RT}-\mathbf{y}_{RI}\left(j\mathbf{B}_{I}+\mathbf{Y}_{II}\right)^{-1}\mathbf{y}_{IT}\right\vert^{2}\label{eq:p1-TC-1}\\
\mathsf{\mathrm{s.t.}}\;\;\;
&\mathbf{B}_{I}=\mathbf{B}_{I}^{T},\;[\mathbf{B}_{I}]_{i,j}=\mathbf{0}\text{ if }\left\vert i-j\right\vert>1,\label{eq:p1-TC-2}
\end{align}
by employing the channel model in \eqref{eq:h-Y}.

Similar to the case with fully-connected RIS, a global optimal closed-form solution for this channel gain maximization problem without mutual coupling at the RIS, i.e., with $\mathbf{Y}_{II}=Y_0\mathbf{I}$, has been proposed in \cite{ner23-1}.
Thus, problem \eqref{eq:p1-TC-1}-\eqref{eq:p1-TC-2} can be solved by ``diagonalizing'' $\mathbf{Y}_{II}$ such that the solution proposed in \cite{ner23-1} can be directly adopted.
To this end, we introduce the auxiliary variable $\bar{\mathbf{B}}_I\in\mathbb{R}^{N_I\times N_I}$ as
\begin{equation}
\bar{\mathbf{B}}_{I}=Y_0\Re\{\mathbf{Y}_{II}\}^{-1/2}(\mathbf{B}_{I}+\Im\{\mathbf{Y}_{II}\})\Re\{\mathbf{Y}_{II}\}^{-1/2},\label{eq:BI-bar}
\end{equation}
which is a real matrix following the positive definiteness of $\Re\{\mathbf{Y}_{II}\}$ (see Proposition~\ref{pro:semi-definite}), giving
\begin{equation}
\mathbf{B}_{I}=\frac{1}{Y_0}\Re\{\mathbf{Y}_{II}\}^{1/2}\bar{\mathbf{B}}_{I}\Re\{\mathbf{Y}_{II}\}^{1/2}-\Im\{\mathbf{Y}_{II}\}.\label{eq:BI}
\end{equation}
Thus, by substituting \eqref{eq:BI} into \eqref{eq:p1-TC-1}, problem \eqref{eq:p1-TC-1}-\eqref{eq:p1-TC-2} becomes
\begin{align}
\underset{\mathbf{B}_{I}}{\mathsf{\mathrm{max}}}\;\;
&\frac{1}{4Y_0^2}\left\vert y_{RT}-\mathbf{y}_{RI}\Re\{\mathbf{Y}_{II}\}^{-1/2}\sqrt{Y_0}\right.\notag\\
&\left.\times\left(j\bar{\mathbf{B}}_{I}+Y_0\mathbf{I}\right)^{-1}\sqrt{Y_0}\Re\{\mathbf{Y}_{II}\}^{-1/2}\mathbf{y}_{IT}\right\vert^{2}\label{eq:p2-TC-1}\\
\mathsf{\mathrm{s.t.}}\;\;\;
&\eqref{eq:BI-bar},\;
\mathbf{B}_{I}=\mathbf{B}_{I}^{T},\;[\mathbf{B}_{I}]_{i,j}=\mathbf{0}\text{ if }\left\vert i-j\right\vert>1,\label{eq:p2-TC-2}
\end{align}
which can be rewritten as
\begin{align}
\underset{\mathbf{B}_{I}}{\mathsf{\mathrm{max}}}\;\;
&\frac{1}{4Y_0^2}\left\vert y_{RT}-\bar{\mathbf{y}}_{RI}\left(j\bar{\mathbf{B}}_{I}+Y_0\mathbf{I}\right)^{-1}\bar{\mathbf{y}}_{IT}\right\vert^{2}\label{eq:p3-TC-1}\\
\mathsf{\mathrm{s.t.}}\;\;\;
&\eqref{eq:BI-bar},\;
\mathbf{B}_{I}=\mathbf{B}_{I}^{T},\;[\mathbf{B}_{I}]_{i,j}=\mathbf{0}\text{ if }\left\vert i-j\right\vert>1,\label{eq:p3-TC-2}
\end{align}
by introducing
\begin{align}
\bar{\mathbf{y}}_{RI}&=\mathbf{y}_{RI}\Re\{\mathbf{Y}_{II}\}^{-1/2}\sqrt{Y_0},\label{eq:yRI-bar}\\
\bar{\mathbf{y}}_{IT}&=\sqrt{Y_0}\Re\{\mathbf{Y}_{II}\}^{-1/2}\mathbf{y}_{IT}.\label{eq:yIT-bar}
\end{align}
To globally solve problem \eqref{eq:p3-TC-1}-\eqref{eq:p3-TC-2}, we introduce
\begin{gather}
\bar{\mathbf{s}}_{RI}=-\frac{\bar{\mathbf{y}}_{RI}}{2Y_0},\;
\bar{\mathbf{s}}_{IT}=-\frac{\bar{\mathbf{y}}_{IT}}{2Y_0},\label{eq:sRI-sIT-bar-TC}\\
\bar{s}_{RT}=-\frac{1}{2Y_0}\left(y_{RT}-\frac{\bar{\mathbf{y}}_{RI}\bar{\mathbf{y}}_{IT}}{2Y_0}\right),\label{eq:sRT-bar-TC}\\
\bar{\boldsymbol{\Theta}}=\left(Y_0\mathbf{I}+j\bar{\mathbf{B}}_I\right)^{-1}\left(Y_0\mathbf{I}-j\bar{\mathbf{B}}_I\right),\label{eq:T-bar-TC}
\end{gather}
such that it can be transformed into
\begin{align}
\underset{\mathbf{B}_{I}}{\mathsf{\mathrm{max}}}\;\;
&\left\vert \bar{s}_{RT}+\bar{\mathbf{s}}_{RI}\bar{\boldsymbol{\Theta}}\bar{\mathbf{s}}_{IT}\right\vert^{2}\label{eq:p4-TC-1}\\
\mathsf{\mathrm{s.t.}}\;\;\;
&\eqref{eq:T-bar-TC},\;\eqref{eq:BI-bar},\;
\mathbf{B}_{I}=\mathbf{B}_{I}^{T},\;[\mathbf{B}_{I}]_{i,j}=\mathbf{0}\text{ if }\left\vert i-j\right\vert>1.\label{eq:p4-TC-2}
\end{align}

To find a global optimal solution of \eqref{eq:p4-TC-1}-\eqref{eq:p4-TC-2} that exactly achieves the performance upper bound in \eqref{eq:UB-FC-1}, we need to find a $\bar{\mathbf{\Theta}}$ such that
\begin{equation}
e^{j\varphi_{RT}}\hat{\mathbf{s}}_{RI}^{H}=\bar{\boldsymbol{\Theta}}\hat{\mathbf{s}}_{IT},\label{eq:cond1}
\end{equation}
where $\varphi_{RT}\in\mathbb{C}$, $\hat{\mathbf{s}}_{RI}\in\mathbb{C}^{1\times N_I}$, and $\hat{\mathbf{s}}_{IT}\in\mathbb{C}^{N_I\times1}$ are given by
\begin{equation}
\varphi_{RT}=\arg\left(\bar{s}_{RT}\right),\;
\hat{\mathbf{s}}_{RI}=\frac{\bar{\mathbf{s}}_{RI}}{\left\Vert\bar{\mathbf{s}}_{RI}\right\Vert_2},\;
\hat{\mathbf{s}}_{IT}=\frac{\bar{\mathbf{s}}_{IT}}{\left\Vert\bar{\mathbf{s}}_{IT}\right\Vert_2}.
\end{equation}
By expressing $\bar{\boldsymbol{\Theta}}$ as in \eqref{eq:T-bar-TC}, condition \eqref{eq:cond1} can be equivalently rewritten as
\begin{equation}
\bar{\mathbf{B}}_I\bar{\boldsymbol{\alpha}}=\bar{\boldsymbol{\beta}},\label{eq:cond2}
\end{equation}
where we introduced $\bar{\boldsymbol{\alpha}}\in\mathbb{C}^{N_I\times1}$
and $\bar{\boldsymbol{\beta}}\in\mathbb{C}^{N_I\times1}$ as
\begin{align}
\bar{\boldsymbol{\alpha}}&=j\left(\hat{\mathbf{s}}_{IT}+e^{j\varphi_{RT}}\hat{\mathbf{s}}_{RI}^{H}\right),\\
\bar{\boldsymbol{\beta}}&=Y_{0}\left(\hat{\mathbf{s}}_{IT}-e^{j\varphi_{RT}}\hat{\mathbf{s}}_{RI}^{H}\right).
\end{align}
Furthermore, by using \eqref{eq:BI-bar} to express $\bar{\mathbf{B}}_{I}$, condition \eqref{eq:cond2} is achieved if and only if
\begin{equation}
\mathbf{B}_I\boldsymbol{\alpha}=\boldsymbol{\beta},\label{eq:cond3}
\end{equation}
where $\boldsymbol{\alpha}\in\mathbb{C}^{N_I\times1}$ and $\boldsymbol{\beta}\in\mathbb{C}^{N_I\times1}$ are introduced as
\begin{align}
\boldsymbol{\alpha}&=\Re\{\mathbf{Y}_{II}\}^{-1/2}\bar{\boldsymbol{\alpha}},\\
\boldsymbol{\beta}&=\frac{1}{Y_0}\Re\{\mathbf{Y}_{II}\}^{1/2}\bar{\boldsymbol{\beta}}-\Im\{\mathbf{Y}_{II}\}\Re\{\mathbf{Y}_{II}\}^{-1/2}\bar{\boldsymbol{\alpha}}.
\end{align}
Note that condition \eqref{eq:cond3} is a system of $N_I$ linear equations in complex coefficients, i.e., the entries of $\boldsymbol{\alpha}$ and $\boldsymbol{\beta}$, with $2N_I-1$ real unknown, i.e., the entries of $\mathbf{B}_I$ that are not constrained to zero.
Remarkably, it is possible to prove that the system in \eqref{eq:cond3} has exactly one solution in general \cite{ner23-1}, and can be solved through the algorithm proposed in \cite{ner23-1} for given $\boldsymbol{\alpha}$ and $\boldsymbol{\beta}$.
The computational complexity of optimizing a tree-connected RIS is driven by the complexity of solving the system of equations in \eqref{eq:cond3} with the solution proposed in \cite{ner23-1}.
Since the system of equations in \eqref{eq:cond3} includes $2N_I-1$ independent equations with real coefficients and $2N_I-1$ real unknown \cite{ner23-1}, and solving such a system requires the inversion of a $(2N_I-1)\times(2N_I-1)$ matrix, the complexity of its solution scales with $\mathcal{O}(N_I^3)$.

\subsection{Channel Gain Upper Bound}

With the global optimal solution of the susceptance matrix $\mathbf{B}_I^\star$ obtained by solving the system \eqref{eq:cond3} through the algorithm proposed in \cite{ner23-1}, the performance upper bound \eqref{eq:UB-FC-1} is exactly achieved.
In the case of the $Y$-parameters representation,
by substituting \eqref{eq:sRI-sIT-bar-TC} and \eqref{eq:sRT-bar-TC} into \eqref{eq:UB-FC-1}, we have
\begin{equation}
\left\vert h^\star\right\vert^2
=\frac{1}{4Y_0^2}
\left(\left\vert y_{RT}-\frac{\bar{\mathbf{y}}_{RI}\bar{\mathbf{y}}_{IT}}{2Y_0}\right\vert+\frac{\left\Vert\bar{\mathbf{y}}_{RI}\right\Vert_2\left\Vert\bar{\mathbf{y}}_{IT}\right\Vert_2}{2Y_0}\right)^{2},\label{eq:UB-TC-2}
\end{equation}
and, by substituting \eqref{eq:yRI-bar} and \eqref{eq:yIT-bar} into \eqref{eq:UB-TC-2}, the maximum achievable channel gain can be rewritten as 
\begin{multline}
\left\vert h^\star\right\vert^2
=\frac{1}{4Y_0^2}
\left(\left\vert y_{RT}-\frac{1}{2}\mathbf{y}_{RI}\Re\{\mathbf{Y}_{II}\}^{-1}\mathbf{y}_{IT}\right\vert\right.\\
\left.+\frac{1}{2}\left\Vert\mathbf{y}_{RI}\Re\{\mathbf{Y}_{II}\}^{-1/2}\right\Vert_2\left\Vert\Re\{\mathbf{Y}_{II}\}^{-1/2}\mathbf{y}_{IT}\right\Vert_2\right)^{2},\label{eq:UB-TC-3}
\end{multline}
which is equal to the channel gain upper bound based on the $Z$-parameters given in \eqref{eq:UB-FC-3} as they both are tight upper bounds on the same objective function\footnote{It is possible to show that \eqref{eq:UB-TC-3} is equal to \eqref{eq:UB-FC-3} by substituting \eqref{eq:yRI-yIT}, \eqref{eq:yRT}, $Y_0=Z_0^{-1}$, and $\mathbf{Y}_{II}=\mathbf{Z}_{II}^{-1}$ into \eqref{eq:UB-TC-3}, but we omit the lengthy computations.}.
Remarkably, this analytical result establishes the optimality of tree-connected RIS also in the presence of mutual coupling.

\section{Channel Gain Scaling Laws}
\label{sec:scaling-laws}

We have derived global optimal closed-form solutions to optimize fully- and tree-connected RISs in the presence of mutual coupling.
We have also provided the expression of the channel gain that can be exactly achieved through these solutions.
In this section, we provide the scaling law of the average channel gain achievable with mutual coupling and we compare it analytically with the average channel gain in the absence of mutual coupling.
To purely account for the effects of the RIS on the channel gain, we consider the direct link between transmitter and receiver to be completely obstructed, i.e., $z_{RT}=0$.
In addition, we assume the channels $\mathbf{z}_{RI}$ and $\mathbf{z}_{IT}$ to be independent and with independent Rayleigh distributed entries, i.e., $\mathbf{z}_{RI}\sim\mathcal{CN}(\mathbf{0},\rho_{RI}\mathbf{I})$ and $\mathbf{z}_{IT}\sim\mathcal{CN}(\mathbf{0},\rho_{IT}\mathbf{I})$, where $\rho_{RI}$ and $\rho_{IT}$ are the path gains\footnote{Despite the assumption of Rayleigh distributed channels in the theoretical derivations, it can be numerically shown that the obtained scaling laws are highly accurate also for independent Rician distributed channels, including Rayleigh and \gls{los} channels as special cases.}.
In the following, we consider the channel model based on the $Z$-parameters to derive the scaling law, while the same conclusions also hold for the $Y$-parameters model given the equivalence between two representations.

\subsection{Scaling Law with Mutual Coupling}

In the presence of mutual coupling and with obstructed direct link, the achievable channel gain is given by
\begin{multline}
\left\vert h_{MC}^\star\right\vert^2
=\frac{1}{16Z_0^2}
\left(\left\vert\mathbf{z}_{RI}\Re\{\mathbf{Z}_{II}\}^{-1}\mathbf{z}_{IT}\right\vert\right.\\
\left.+\left\Vert\mathbf{z}_{RI}\Re\{\mathbf{Z}_{II}\}^{-1/2}\right\Vert_2\left\Vert\Re\{\mathbf{Z}_{II}\}^{-1/2}\mathbf{z}_{IT}\right\Vert_2\right)^{2},\label{eq:UB-MC}
\end{multline}
which follows by substituting $z_{RT}=0$ in \eqref{eq:UB-FC-3}.
Thus, by taking the expectation of \eqref{eq:UB-MC}, we obtain
\begin{multline}
\mathbb{E}\left[\left\vert h_{MC}^\star\right\vert^2\right]
=\frac{1}{16Z_0^2}
\left(\mathbb{E}\left[\left\vert\mathbf{z}_{RI}\Re\{\mathbf{Z}_{II}\}^{-1}\mathbf{z}_{IT}\right\vert^2\right]\right.\\
+\mathbb{E}\left[\left\Vert\mathbf{z}_{RI}\Re\{\mathbf{Z}_{II}\}^{-1/2}\right\Vert_2^2\right]
 \mathbb{E}\left[\left\Vert\Re\{\mathbf{Z}_{II}\}^{-1/2}\mathbf{z}_{IT}\right\Vert_2^2\right]\\
+2\mathbb{E}\left[\left\vert\mathbf{z}_{RI}\Re\{\mathbf{Z}_{II}\}^{-1}\mathbf{z}_{IT}\right\vert\right]\\
\left.\times\mathbb{E}\left[\left\Vert\mathbf{z}_{RI}\Re\{\mathbf{Z}_{II}\}^{-1/2}\right\Vert_2\right]
            \mathbb{E}\left[\left\Vert\Re\{\mathbf{Z}_{II}\}^{-1/2}\mathbf{z}_{IT}\right\Vert_2\right]\right),\label{eq:E-UB-MC}
\end{multline}
where we exploited the independence between $\mathbf{z}_{RI}$ and $\mathbf{z}_{IT}$ and we considered the random variable $\vert\mathbf{z}_{RI}\Re\{\mathbf{Z}_{II}\}^{-1}\mathbf{z}_{IT}\vert$ to be approximately uncorrelated with $\Vert\mathbf{z}_{RI}\Re\{\mathbf{Z}_{II}\}^{-1/2}\Vert_2$ and $\Vert\Re\{\mathbf{Z}_{II}\}^{-1/2}\mathbf{z}_{IT}\Vert_2$.
The validity of this approximation will be verified in Section~\ref{sec:results}.
In the following, we provide a closed-form expression of $\mathbb{E}[\vert h_{MC}^\star\vert^2]$ by individually studying each expectation term in \eqref{eq:E-UB-MC}.

\textit{First}, the term $\mathbb{E}[\vert\mathbf{z}_{RI}\Re\{\mathbf{Z}_{II}\}^{-1}\mathbf{z}_{IT}\vert^2]$ in \eqref{eq:E-UB-MC} can be expressed as
\begin{align}
&\mathbb{E}\left[\left\vert\mathbf{z}_{RI}\Re\{\mathbf{Z}_{II}\}^{-1}\mathbf{z}_{IT}\right\vert^2\right]\\
&=\mathbb{E}\left[\mathbf{z}_{RI}\Re\{\mathbf{Z}_{II}\}^{-1}\mathbf{z}_{IT}\mathbf{z}_{IT}^H\Re\{\mathbf{Z}_{II}\}^{-1}\mathbf{z}_{RI}^H\right]\\
&=\mathbb{E}\left[\text{Tr}\left(\mathbf{z}_{RI}^H\mathbf{z}_{RI}\Re\{\mathbf{Z}_{II}\}^{-1}\mathbf{z}_{IT}\mathbf{z}_{IT}^H\Re\{\mathbf{Z}_{II}\}^{-1}\right)\right]\\
&=\text{Tr}\left(\mathbb{E}\left[\mathbf{z}_{RI}^H\mathbf{z}_{RI}\right]\Re\{\mathbf{Z}_{II}\}^{-1}\mathbb{E}\left[\mathbf{z}_{IT}\mathbf{z}_{IT}^H\right]\Re\{\mathbf{Z}_{II}\}^{-1}\right),\label{eq:E-UB-MC-1-tmp}
\end{align}
by exploiting the symmetry of the Frobenius inner product, the linearity of the trace, and that $\mathbf{z}_{RI}$ and $\mathbf{z}_{IT}$ are independent.
By also noticing that $\mathbb{E}[\mathbf{z}_{RI}^H\mathbf{z}_{RI}]=\rho_{RI}\mathbf{I}$ and $\mathbb{E}[\mathbf{z}_{IT}\mathbf{z}_{IT}^H]=\rho_{IT}\mathbf{I}$ since $\mathbf{z}_{RI}\sim\mathcal{CN}(\mathbf{0},\rho_{RI}\mathbf{I})$ and $\mathbf{z}_{IT}\sim\mathcal{CN}(\mathbf{0},\rho_{IT}\mathbf{I})$, \eqref{eq:E-UB-MC-1-tmp} becomes
\begin{equation}
\mathbb{E}\left[\left\vert\mathbf{z}_{RI}\Re\{\mathbf{Z}_{II}\}^{-1}\mathbf{z}_{IT}\right\vert^2\right]
=\rho_{RI}\rho_{IT}\text{Tr}\left(\Re\{\mathbf{Z}_{II}\}^{-2}\right).\label{eq:E-UB-MC-1}
\end{equation}

\textit{Second}, $\mathbb{E}[\Vert\mathbf{z}_{RI}\Re\{\mathbf{Z}_{II}\}^{-1/2}\Vert_2^2]$ in \eqref{eq:E-UB-MC} can be rewritten as
\begin{align}
&\mathbb{E}\left[\left\Vert\mathbf{z}_{RI}\Re\{\mathbf{Z}_{II}\}^{-1/2}\right\Vert_2^2\right]\\
&=\mathbb{E}\left[\mathbf{z}_{RI}\Re\{\mathbf{Z}_{II}\}^{-1}\mathbf{z}_{RI}^H\right]\\
&=\mathbb{E}\left[\text{Tr}\left(\Re\{\mathbf{Z}_{II}\}^{-1}\mathbf{z}_{RI}^H\mathbf{z}_{RI}\right)\right]\\
&=\text{Tr}\left(\Re\{\mathbf{Z}_{II}\}^{-1}\mathbb{E}\left[\mathbf{z}_{RI}^H\mathbf{z}_{RI}\right]\right),
\end{align}
following the symmetry of the Frobenius inner product and the linearity of the trace.
Thus, by recalling that $\mathbb{E}[\mathbf{z}_{RI}^H\mathbf{z}_{RI}]=\rho_{RI}\mathbf{I}$, we obtain
\begin{equation}
\mathbb{E}\left[\left\Vert\mathbf{z}_{RI}\Re\{\mathbf{Z}_{II}\}^{-1/2}\right\Vert_2^2\right]
=\rho_{RI}\text{Tr}\left(\Re\{\mathbf{Z}_{II}\}^{-1}\right).\label{eq:E-UB-MC-2a}
\end{equation}
With a similar argument, it is possible to show that
\begin{equation}
\mathbb{E}\left[\left\Vert\Re\{\mathbf{Z}_{II}\}^{-1/2}\mathbf{z}_{IT}\right\Vert_2^2\right]
=\rho_{IT}\text{Tr}\left(\Re\{\mathbf{Z}_{II}\}^{-1}\right),\label{eq:E-UB-MC-2b}
\end{equation}
since $\mathbb{E}[\mathbf{z}_{IT}\mathbf{z}_{IT}^H]=\rho_{IT}\mathbf{I}$.

\textit{Third}, to compute $\mathbb{E}[\vert\mathbf{z}_{RI}\Re\{\mathbf{Z}_{II}\}^{-1}\mathbf{z}_{IT}\vert]$ in \eqref{eq:E-UB-MC}, we use the eigenvalue decomposition of $\Re\{\mathbf{Z}_{II}\}^{-1}$.
Since $\Re\{\mathbf{Z}_{II}\}^{-1}$ is a real symmetric matrix, we have $\Re\{\mathbf{Z}_{II}\}^{-1}=\mathbf{Q}\mathbf{\Lambda}\mathbf{Q}^T$, where $\mathbf{Q}\in\mathbb{R}^{N_I\times N_I}$ is an orthogonal matrix whose columns are the eigenvectors of $\Re\{\mathbf{Z}_{II}\}^{-1}$, and $\mathbf{\Lambda}=\text{diag}(\lambda_1,\ldots,\lambda_{N_I})$, with $\lambda_{n_I}\in\mathbb{R}$ being the $n_I$th eigenvalue of $\Re\{\mathbf{Z}_{II}\}^{-1}$.
Consequently, the random variable $R=\mathbf{z}_{RI}\Re\{\mathbf{Z}_{II}\}^{-1}\mathbf{z}_{IT}$ can be rewritten as $R=\mathbf{q}_{RI}\mathbf{\Lambda}\mathbf{q}_{IT}$, where $\mathbf{q}_{RI}=\mathbf{z}_{RI}\mathbf{Q}$ is distributed as $\mathbf{q}_{RI}\sim\mathcal{CN}(\mathbf{0},\rho_{RI}\mathbf{I})$ and $\mathbf{q}_{IT}=\mathbf{Q}^T\mathbf{z}_{IT}$ is distributed as $\mathbf{q}_{IT}\sim\mathcal{CN}(\mathbf{0},\rho_{IT}\mathbf{I})$ due to the orthogonality of $\mathbf{Q}$.
Since $R$ can be expressed as $R=\mathbf{q}_{RI}\mathbf{\Lambda}\mathbf{q}_{IT}$, $R$ is the sum of $N_I$ independent random variables, i.e., $R=\sum_{n_I=1}^{N_I}R_{n_I}$, where $R_{n_I}=[\mathbf{q}_{RI}]_{n_I}\lambda_{n_I}[\mathbf{q}_{IT}]_{n_I}$ has variance $\rho_{RI}\rho_{IT}\lambda_{n_I}^2$, for $n_I=1,\ldots,N_I$.
Thus, according to the Lyapunov \gls{clt}, $R$ is distributed as $R\sim\mathcal{CN}(0,\rho_{RI}\rho_{IT}\sum_{n_I=1}^{N_I}\lambda_{n_I}^2)$.
Since $\lambda_{n_I}^2$, for $n_I=1,\ldots,N_I$, are the eigenvalues of $\Re\{\mathbf{Z}_{II}\}^{-2}$, we have $\sum_{n_I=1}^{N_I}\lambda_{n_I}^2=\text{Tr}(\Re\{\mathbf{Z}_{II}\}^{-2})$, giving $R\sim\mathcal{CN}(0,\rho_{RI}\rho_{IT}\text{Tr}\left(\Re\{\mathbf{Z}_{II}\}^{-2}\right))$ and
\begin{equation}
\mathbb{E}\left[\left\vert\mathbf{z}_{RI}\Re\{\mathbf{Z}_{II}\}^{-1}\mathbf{z}_{IT}\right\vert\right]
=\sqrt{\frac{\pi}{4}\rho_{RI}\rho_{IT}\text{Tr}\left(\Re\{\mathbf{Z}_{II}\}^{-2}\right)},\label{eq:E-UB-MC-3}
\end{equation}
following the expression of the mean of the Rayleigh distribution.

\textit{Fourth}, to obtain $\mathbb{E}[\Vert\mathbf{z}_{RI}\Re\{\mathbf{Z}_{II}\}^{-1/2}\Vert_2]$ in \eqref{eq:E-UB-MC}, we exploit the fact that the squared norm $\Vert\mathbf{z}_{RI}\Re\{\mathbf{Z}_{II}\}^{-1/2}\Vert_2^2$ becomes deterministic as the number of RIS elements $N_I$ increases because of the so-called channel hardening phenomena \cite{ngo17}, i.e., 
\begin{equation}
\text{Var}\left(\frac{\left\Vert\mathbf{z}_{RI}\Re\{\mathbf{Z}_{II}\}^{-1/2}\right\Vert_2^2}{\mathbb{E}\left[\left\Vert\mathbf{z}_{RI}\Re\{\mathbf{Z}_{II}\}^{-1/2}\right\Vert_2^2\right]}\right)\rightarrow0,\label{eq:channel-hardening}
\end{equation}
as $N_I\rightarrow\infty$.
To show that \eqref{eq:channel-hardening} holds as $N_I\rightarrow\infty$, we consider the eigenvalue decomposition $\Re\{\mathbf{Z}_{II}\}^{-1}=\mathbf{Q}\mathbf{\Lambda}\mathbf{Q}^T$ to rewrite the random variable $S=\Vert\mathbf{z}_{RI}\Re\{\mathbf{Z}_{II}\}^{-1/2}\Vert_2^2$ as $S=\mathbf{q}_{RI}\mathbf{\Lambda}\mathbf{q}_{RI}^H$, where $\mathbf{q}_{RI}=\mathbf{z}_{RI}\mathbf{Q}\sim\mathcal{CN}(\mathbf{0},\rho_{RI}\mathbf{I})$.
Thus, $S$ is the sum of $N_I$ independent random variables, i.e., $S=\sum_{n_I=1}^{N_I}S_{n_I}$, where $S_{n_I}=\vert[\mathbf{q}_{RI}]_{n_I}\vert^2\lambda_{n_I}$ has variance $\rho_{RI}^2\lambda_{n_I}^2$, for $n_I=1,\ldots,N_I$, and, following the Lyapunov \gls{clt}, we have
\begin{equation}
\text{Var}\left(\left\Vert\mathbf{z}_{RI}\Re\{\mathbf{Z}_{II}\}^{-1/2}\right\Vert_2^2\right)=\rho_{RI}^2\text{Tr}\left(\Re\{\mathbf{Z}_{II}\}^{-2}\right),\label{eq:channel-hardening-1}
\end{equation}
since $\sum_{n_I=1}^{N_I}\lambda_{n_I}^2=\text{Tr}(\Re\{\mathbf{Z}_{II}\}^{-2})$.
By using \eqref{eq:channel-hardening-1} and observing that $\mathbb{E}[\Vert\mathbf{z}_{RI}\Re\{\mathbf{Z}_{II}\}^{-1/2}\Vert_2^2]^2=\rho_{RI}^2\text{Tr}(\Re\{\mathbf{Z}_{II}\}^{-1})^2$ because of \eqref{eq:E-UB-MC-2a}, the variance of the ratio in \eqref{eq:channel-hardening} can be rewritten as
\begin{equation}
\text{Var}\left(\frac{\left\Vert\mathbf{z}_{RI}\Re\{\mathbf{Z}_{II}\}^{-1/2}\right\Vert_2^2}{\mathbb{E}\left[\left\Vert\mathbf{z}_{RI}\Re\{\mathbf{Z}_{II}\}^{-1/2}\right\Vert_2^2\right]}\right)
=\frac{\text{Tr}\left(\Re\{\mathbf{Z}_{II}\}^{-2}\right)}{\text{Tr}\left(\Re\{\mathbf{Z}_{II}\}^{-1}\right)^2}.
\end{equation}
By considering the inequalities $\text{Tr}\left(\Re\{\mathbf{Z}_{II}\}^{-2}\right)\leq\lambda_1^2N_I$, where $\lambda_1$ is the dominant eigenvalue of $\Re\{\mathbf{Z}_{II}\}^{-1}$, and $\text{Tr}\left(\Re\{\mathbf{Z}_{II}\}^{-1}\right)\geq\lambda_{N_I}N_I$, where $\lambda_{N_I}$ is the smallest eigenvalue of $\Re\{\mathbf{Z}_{II}\}^{-1}$, we can write
\begin{align}
\text{Var}\left(\frac{\left\Vert\mathbf{z}_{RI}\Re\{\mathbf{Z}_{II}\}^{-1/2}\right\Vert_2^2}{\mathbb{E}\left[\left\Vert\mathbf{z}_{RI}\Re\{\mathbf{Z}_{II}\}^{-1/2}\right\Vert_2^2\right]}\right)
&\leq\frac{\lambda_1^2N_I}{\lambda_{N_I}^2N_I^2}\\
&=\frac{k^2}{N_I}\rightarrow0,
\end{align}
as $N_I\rightarrow\infty$, where we introduced the condition number of $\Re\{\mathbf{Z}_{II}\}^{-1}$ as $k=\lambda_1/\lambda_{N_I}$.

Since $\Vert\mathbf{z}_{RI}\Re\{\mathbf{Z}_{II}\}^{-1/2}\Vert_2^2$ fluctuates only a little around its mean value, it can be approximated as deterministic, i.e., 
\begin{align}
\left\Vert\mathbf{z}_{RI}\Re\{\mathbf{Z}_{II}\}^{-1/2}\right\Vert_2^2
\overset{N_I\nearrow}{\approx}&\mathbb{E}\left[\left\Vert\mathbf{z}_{RI}\Re\{\mathbf{Z}_{II}\}^{-1/2}\right\Vert_2^2\right]\\
=&\rho_{RI}\text{Tr}\left(\Re\{\mathbf{Z}_{II}\}^{-1}\right),
\end{align}
as $N_I\rightarrow\infty$, following \eqref{eq:E-UB-MC-2a}.
Consequently, we have
\begin{equation}
\mathbb{E}\left[\left\Vert\mathbf{z}_{RI}\Re\{\mathbf{Z}_{II}\}^{-1/2}\right\Vert_2\right]
\overset{N_I\nearrow}{\approx}\sqrt{\rho_{RI}\text{Tr}\left(\Re\{\mathbf{Z}_{II}\}^{-1}\right)}.\label{eq:E-UB-MC-4a}
\end{equation}
In a similar way, it can also be shown that 
\begin{equation}
\mathbb{E}\left[\left\Vert\Re\{\mathbf{Z}_{II}\}^{-1/2}\mathbf{z}_{IT}\right\Vert_2\right]
\overset{N_I\nearrow}{\approx}\sqrt{\rho_{IT}\text{Tr}\left(\Re\{\mathbf{Z}_{II}\}^{-1}\right)}.\label{eq:E-UB-MC-4b}
\end{equation}
Interestingly, although the expressions in \eqref{eq:E-UB-MC-4a} and \eqref{eq:E-UB-MC-4b} are obtained by exploiting the channel hardening approximation valid for $N_I\rightarrow\infty$, they remain precise even for practical numbers of RIS elements, such as $N_I=64$, as will be illustrated in Section~\ref{sec:results}.

Given the expressions of all the expectation terms in \eqref{eq:E-UB-MC}, the expression of $\mathbb{E}[\vert h_{MC}^\star\vert^2]$ can now be obtained.
Specifically, the scaling law of the average channel gain with mutual coupling under independent Rayleigh fading channels is finally given by
\begin{multline}
\mathbb{E}\left[\left\vert h_{MC}^\star\right\vert^2\right]
=\frac{\rho_{RI}\rho_{IT}}{16Z_0^2}
\left(\text{Tr}\left(\Re\{\mathbf{Z}_{II}\}^{-2}\right)
+\text{Tr}\left(\Re\{\mathbf{Z}_{II}\}^{-1}\right)^2\right.\\
\left.+\sqrt{\pi\text{Tr}\left(\Re\{\mathbf{Z}_{II}\}^{-2}\right)}
\text{Tr}\left(\Re\{\mathbf{Z}_{II}\}^{-1}\right)\right),\label{eq:E-UB-MC-CF}
\end{multline}
which is obtained by substituting \eqref{eq:E-UB-MC-1}, \eqref{eq:E-UB-MC-2a}, \eqref{eq:E-UB-MC-2b}, \eqref{eq:E-UB-MC-3}, \eqref{eq:E-UB-MC-4a}, and \eqref{eq:E-UB-MC-4b} in \eqref{eq:E-UB-MC}.
Interestingly, $\mathbb{E}[\vert h_{MC}^\star\vert^2]$ solely depends on the product of the channel gains of $\mathbf{z}_{RI}$ and $\mathbf{z}_{IT}$ denoted as $\rho_{RI}\rho_{IT}$, and on the two traces $\text{Tr}(\Re\{\mathbf{Z}_{II}\}^{-2})$ and $\text{Tr}(\Re\{\mathbf{Z}_{II}\}^{-1})$.

\begin{table*}[t]
\centering
\caption{Expressions of the channel gains and their scaling laws with and without mutual coupling.}
\begin{tabular}{@{}ll@{}}
\toprule
 & Expression\\
\midrule
Channel gain with mutual coupling &
$\left\vert h_{MC}^\star\right\vert^2
=\frac{1}{16Z_0^2}
\left(\left\vert\mathbf{z}_{RI}\Re\{\mathbf{Z}_{II}\}^{-1}\mathbf{z}_{IT}\right\vert
+\left\Vert\mathbf{z}_{RI}\Re\{\mathbf{Z}_{II}\}^{-1/2}\right\Vert_2\left\Vert\Re\{\mathbf{Z}_{II}\}^{-1/2}\mathbf{z}_{IT}\right\Vert_2\right)^{2}$
\\
Channel gain with no mutual coupling &
$\left\vert h_{NoMC}^\star\right\vert^2
=\frac{1}{16Z_0^2Z_{II}^2}
\left(\left\vert\mathbf{z}_{RI}\mathbf{z}_{IT}\right\vert
+\left\Vert\mathbf{z}_{RI}\right\Vert_2\left\Vert\mathbf{z}_{IT}\right\Vert_2\right)^{2}$
\\
Scaling law with mutual coupling &
$\mathbb{E}\left[\left\vert h_{MC}^\star\right\vert^2\right]
=\frac{\rho_{RI}\rho_{IT}}{16Z_0^2}
\left(\text{Tr}\left(\Re\{\mathbf{Z}_{II}\}^{-2}\right)
+\text{Tr}\left(\Re\{\mathbf{Z}_{II}\}^{-1}\right)^2
+\sqrt{\pi\text{Tr}\left(\Re\{\mathbf{Z}_{II}\}^{-2}\right)}
\text{Tr}\left(\Re\{\mathbf{Z}_{II}\}^{-1}\right)\right)$
\\
Scaling law with no mutual coupling &
$\mathbb{E}\left[\left\vert h_{NoMC}^\star\right\vert^2\right]
=\frac{\rho_{RI}\rho_{IT}}{16Z_0^2Z_{II}^2}
\left(N_I+N_I^2+\sqrt{\pi N_I}N_I\right)$
\\
\bottomrule
\end{tabular}
\label{tab:exp}
\end{table*}

\subsection{Scaling Law with No Mutual Coupling}

In the absence of mutual coupling, but with the possible presence of impedance mismatching, the mutual coupling matrix $\mathbf{Z}_{II}$ is diagonal, where its diagonal entries represent the self-impedance of the RIS antennas.
Thus, in this case, we have $\Re\{\mathbf{Z}_{II}\}=Z_{II}\mathbf{I}$, where $Z_{II}\in\mathbb{R}$ is the real part of the self-impedance of the RIS antennas, being $Z_{II}=Z_0$ in the case of perfect matching.
By substituting $\Re\{\mathbf{Z}_{II}\}=Z_{II}\mathbf{I}$ in \eqref{eq:UB-MC}, we obtain the achievable channel gain in the absence of mutual coupling as
\begin{equation}
\left\vert h_{NoMC}^\star\right\vert^2
=\frac{1}{16Z_0^2Z_{II}^2}
\left(\left\vert\mathbf{z}_{RI}\mathbf{z}_{IT}\right\vert
+\left\Vert\mathbf{z}_{RI}\right\Vert_2\left\Vert\mathbf{z}_{IT}\right\Vert_2\right)^{2}.\label{eq:UB-NoMC}
\end{equation}
By taking the expectation, $\mathbb{E}[\vert h_{NoMC}^\star\vert^2]$ writes as
\begin{multline}
\mathbb{E}\left[\left\vert h_{NoMC}^\star\right\vert^2\right]
=\frac{1}{16Z_0^2Z_{II}^2}
\left(\mathbb{E}\left[\left\vert\mathbf{z}_{RI}\mathbf{z}_{IT}\right\vert^2\right]\right.\\
+\mathbb{E}\left[\left\Vert\mathbf{z}_{RI}\right\Vert_2^2\right]
 \mathbb{E}\left[\left\Vert\mathbf{z}_{IT}\right\Vert_2^2\right]\\
\left.+2\mathbb{E}\left[\left\vert\mathbf{z}_{RI}\mathbf{z}_{IT}\right\vert\right]
        \mathbb{E}\left[\left\Vert\mathbf{z}_{RI}\right\Vert_2\right]
        \mathbb{E}\left[\left\Vert\mathbf{z}_{IT}\right\Vert_2\right]\right),\label{eq:E-UB-NoMC}
\end{multline}
where we exploited the independence between $\mathbf{z}_{RI}$ and $\mathbf{z}_{IT}$ and appoximated the random variable $\vert\mathbf{z}_{RI}\mathbf{z}_{IT}\vert$ as uncorrelated with $\Vert\mathbf{z}_{RI}\Vert_2$ and $\Vert\mathbf{z}_{IT}\Vert_2$.
This approximation will be validated in Section~\ref{sec:results}.
In the following, we derive a closed-form expression for \eqref{eq:E-UB-NoMC} by individually studying all the expectation terms therein.

\textit{First}, the term $\mathbb{E}[\vert\mathbf{z}_{RI}\mathbf{z}_{IT}\vert^2]$ in \eqref{eq:E-UB-NoMC} can be rewritten as
\begin{align}
\mathbb{E}\left[\left\vert\mathbf{z}_{RI}\mathbf{z}_{IT}\right\vert^2\right]
&=\mathbb{E}\left[\mathbf{z}_{RI}\mathbf{z}_{IT}\mathbf{z}_{IT}^H\mathbf{z}_{RI}^H\right]\\
&=\mathbb{E}\left[\text{Tr}\left(\mathbf{z}_{RI}^H\mathbf{z}_{RI}\mathbf{z}_{IT}\mathbf{z}_{IT}^H\right)\right]\\
&=\text{Tr}\left(\mathbb{E}\left[\mathbf{z}_{RI}^H\mathbf{z}_{RI}\right]\mathbb{E}\left[\mathbf{z}_{IT}\mathbf{z}_{IT}^H\right]\right),
\end{align}
which simplifies as 
\begin{equation}
\mathbb{E}\left[\left\vert\mathbf{z}_{RI}\mathbf{z}_{IT}\right\vert^2\right]
=\rho_{RI}\rho_{IT}N_I,\label{eq:E-UB-NoMC-1}
\end{equation}
since $\mathbb{E}[\mathbf{z}_{RI}^H\mathbf{z}_{RI}]=\rho_{RI}\mathbf{I}$ and $\mathbb{E}[\mathbf{z}_{IT}\mathbf{z}_{IT}^H]=\rho_{IT}\mathbf{I}$.

\textit{Second}, $\mathbb{E}[\Vert\mathbf{z}_{RI}\Vert_2^2]$ and $\mathbb{E}[\Vert\mathbf{z}_{IT}\Vert_2^2]$ in \eqref{eq:E-UB-NoMC} are given by
\begin{equation}
\mathbb{E}\left[\left\Vert\mathbf{z}_{RI}\right\Vert_2^2\right]
=\rho_{RI}N_I,\;
\mathbb{E}\left[\left\Vert\mathbf{z}_{IT}\right\Vert_2^2\right]
=\rho_{IT}N_I,\label{eq:E-UB-NoMC-2}
\end{equation}
by exploiting the second moment of the chi distribution with $2N_I$ degrees of freedom.

\textit{Third}, $\mathbb{E}\left[\left\vert\mathbf{z}_{RI}\mathbf{z}_{IT}\right\vert\right]$ in \eqref{eq:E-UB-NoMC} can be computed by noticing that the random variable $T=\mathbf{z}_{RI}\mathbf{z}_{IT}$ is the sum of $N_I$ independent random variables, each with variance $\rho_{RI}\rho_{IT}$.
Thus, according to the \gls{clt}, $T$ is distributed as $T\sim\mathcal{CN}(0,\rho_{RI}\rho_{IT}N_I)$.
By recalling the expression of the mean of the Rayleigh distribution, we eventually obtain
\begin{equation}
\mathbb{E}\left[\left\vert\mathbf{z}_{RI}\mathbf{z}_{IT}\right\vert\right]
=\sqrt{\frac{\pi}{4}\rho_{RI}\rho_{IT}N_I}.\label{eq:E-UB-NoMC-3}
\end{equation}

\textit{Fourth}, $\mathbb{E}[\Vert\mathbf{z}_{RI}\Vert_2]$ and $\mathbb{E}[\Vert\mathbf{z}_{IT}\Vert_2]$ in \eqref{eq:E-UB-NoMC} are given by
\begin{align}
\mathbb{E}\left[\left\Vert\mathbf{z}_{RI}\right\Vert_2\right]
&=\sqrt{\rho_{RI}}\frac{\Gamma\left(N_I+1/2\right)}{\Gamma\left(N_I\right)},\label{eq:E-UB-NoMC-4a-tmp}\\
\mathbb{E}\left[\left\Vert\mathbf{z}_{IT}\right\Vert_2\right]
&=\sqrt{\rho_{IT}}\frac{\Gamma\left(N_I+1/2\right)}{\Gamma\left(N_I\right)},\label{eq:E-UB-NoMC-4b-tmp}
\end{align}
respectively, where $\Gamma(\cdot)$ is the Gamma function, following the first moment of the chi distribution with $2N_I$ degrees of freedom.
The expressions of $\mathbb{E}[\Vert\mathbf{z}_{RI}\Vert_2]$ and $\mathbb{E}[\Vert\mathbf{z}_{IT}\Vert_2]$ in \eqref{eq:E-UB-NoMC-4a-tmp} and \eqref{eq:E-UB-NoMC-4b-tmp} can be simplified by using the Laurent series expansion of $\Gamma(N_I+1/2)/\Gamma(N_I)$ at $N_I\rightarrow\infty$, i.e.,
\begin{equation}
\frac{\Gamma\left(N_I+1/2\right)}{\Gamma\left(N_I\right)}=\sqrt{N_I-\frac{1}{4}+\mathcal{O}\left(\frac{1}{N_I}\right)},\label{eq:Laurent}
\end{equation}
as $N_I\rightarrow\infty$ \cite{ahl53}.
From \eqref{eq:Laurent}, we notice that the function $\Gamma(N_I+1/2)/\Gamma(N_I)$ is well approximated by $\sqrt{N_I}$ for high values of $N_I$, allowing to rewrite \eqref{eq:E-UB-NoMC-4a-tmp} and \eqref{eq:E-UB-NoMC-4b-tmp} as
\begin{equation}
\mathbb{E}\left[\left\Vert\mathbf{z}_{RI}\right\Vert_2\right]
=\sqrt{\rho_{RI}N_I},\;
\mathbb{E}\left[\left\Vert\mathbf{z}_{IT}\right\Vert_2\right]
=\sqrt{\rho_{IT}N_I}.\label{eq:E-UB-NoMC-4}
\end{equation}
Remarkably, despite the approximation in \eqref{eq:E-UB-NoMC-4} is derived from the Laurent series expansion at $N_I\rightarrow\infty$, it is highly accurate also for practical numbers of RIS elements, as it will be shown in Section~\ref{sec:results}.

By substituting \eqref{eq:E-UB-NoMC-1}, \eqref{eq:E-UB-NoMC-2}, \eqref{eq:E-UB-NoMC-3}, and \eqref{eq:E-UB-NoMC-4} in \eqref{eq:E-UB-NoMC}, we finally obtain the scaling law of the average channel gain with no mutual coupling as
\begin{equation}
\mathbb{E}\left[\left\vert h_{NoMC}^\star\right\vert^2\right]
=\frac{\rho_{RI}\rho_{IT}}{16Z_0^2Z_{II}^2}
\left(N_I+N_I^2+\sqrt{\pi N_I}N_I\right).\label{eq:E-UB-NoMC-CF}
\end{equation}
Note that \eqref{eq:E-UB-NoMC-CF} depends solely on the product of the channel gains $\rho_{RI}\rho_{IT}$, on the real part of the RIS antennas self-impedance $Z_{II}$, and on the number of RIS elements $N_I$.
Furthermore, \eqref{eq:E-UB-NoMC-CF} can be seen as a special case of \eqref{eq:E-UB-MC-CF} in which $\Re\{\mathbf{Z}_{II}\}=Z_{II}\mathbf{I}$, although \eqref{eq:E-UB-NoMC-CF} and \eqref{eq:E-UB-MC-CF} have been derived with different strategies and approximations.
The expression in \eqref{eq:E-UB-NoMC-CF}, proportional to $N_I^2+\sqrt{\pi N_I}N_I+N_I$, differs from the scaling law of fully-connected RISs with no mutual coupling provided in \cite{she20}, i.e., $N_I^2$, since \cite{she20} neglects the effects of the structural scattering, or specular reflection, of the RIS, as clarified in \cite[Section~V]{ner23-2}.
In Tab.~\ref{tab:exp}, we summarize the findings of this section by reporting the expressions of the channel gains with and without mutual coupling, together with their corresponding scaling laws.

\section{Impact of Mutual Coupling\\on the Average Channel Gain}
\label{sec:impact}

We have derived the scaling law of the average channel gain in the presence and absence of mutual coupling in \eqref{eq:E-UB-MC-CF} and \eqref{eq:E-UB-NoMC-CF}, respectively.
In this section, we analytically assess whether mutual coupling is detrimental or beneficial in enhancing the average channel gain.
To compare the average channel gain in the presence and in the absence of mutual coupling, we recall that the scaling law in \eqref{eq:E-UB-MC-CF} is determined by two trace terms, i.e., $\text{Tr}(\Re\{\mathbf{Z}_{II}\}^{-2})$ and $\mathrm{Tr}(\Re\{\mathbf{Z}_{II}\}^{-1})$, and introduce the following two lemmas to gain insights into the two trace terms.

\begin{lemma}
Given a positive definite matrix $\mathbf{A}\in\mathbb{R}^{N\times N}$ with diagonal elements $[\mathbf{A}]_{n,n}=a$, for $n=1,\ldots,N$, it holds
\begin{equation}
\emph{Tr}\left(\mathbf{A}^{-1}\right)\geq\frac{N}{a}.\label{eq:lem1}
\end{equation}
\label{lem:1}
\end{lemma}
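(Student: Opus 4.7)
The plan is to reduce the statement to a scalar inequality on the eigenvalues of $\mathbf{A}$. Since $\mathbf{A}$ is positive definite and real symmetric (it is a real positive definite matrix, hence symmetric), it admits an eigenvalue decomposition with strictly positive real eigenvalues $\mu_1,\ldots,\mu_N$. The trace is invariant under similarity, so the hypothesis $[\mathbf{A}]_{n,n}=a$ for every $n$ yields
\begin{equation}
\sum_{n=1}^{N}\mu_n=\mathrm{Tr}(\mathbf{A})=Na,
\end{equation}
while
\begin{equation}
\mathrm{Tr}(\mathbf{A}^{-1})=\sum_{n=1}^{N}\frac{1}{\mu_n}.
\end{equation}

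First I would establish these two trace identities, which reduce the lemma to showing $\sum_{n=1}^{N}1/\mu_n\geq N^2/\sum_{n=1}^{N}\mu_n$. Second, I would invoke the arithmetic mean--harmonic mean inequality (equivalently, the Cauchy--Schwarz inequality applied to the vectors with entries $\sqrt{\mu_n}$ and $1/\sqrt{\mu_n}$), which states that for positive reals $\mu_1,\ldots,\mu_N$,
\begin{equation}
\left(\sum_{n=1}^{N}\mu_n\right)\!\left(\sum_{n=1}^{N}\frac{1}{\mu_n}\right)\geq N^{2}.
\end{equation}
Combining this with $\sum_n\mu_n=Na$ immediately gives $\mathrm{Tr}(\mathbf{A}^{-1})\geq N/a$, which is \eqref{eq:lem1}.

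There is no real obstacle here; the only thing to be careful about is ensuring that the eigenvalues are strictly positive so that $1/\mu_n$ is well-defined and the AM--HM inequality applies, but this is guaranteed by the positive definiteness hypothesis. As a by-product, it is worth noting that equality in AM--HM holds if and only if all $\mu_n$ coincide, which translates to $\mathbf{A}=a\mathbf{I}$; this aligns with the intuition that the bound \eqref{eq:lem1} is achieved precisely when the mutual coupling vanishes, a fact that will be used in Section~\ref{sec:impact} to compare the scaling laws with and without mutual coupling.
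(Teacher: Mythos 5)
Your proof is correct, and it shares the paper's reduction: diagonalize $\mathbf{A}$, use $\mathrm{Tr}(\mathbf{A})=Na$ and $\mathrm{Tr}(\mathbf{A}^{-1})=\sum_{n}1/\mu_n$, so everything hinges on the scalar inequality $\sum_{n}1/\mu_n\geq N^2/\sum_{n}\mu_n$. Where you differ is in how that inequality is justified. The paper formulates it as a constrained minimization of $\sum_{n}\lambda_n^{-1}$ subject to $\sum_{n}\lambda_n=aN$, $\lambda_n>0$, and argues by contradiction that a candidate minimizer with two unequal eigenvalues is improved by replacing both with their average, so the minimum sits at $\lambda_1=\cdots=\lambda_N=a$. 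You instead invoke the AM--HM inequality (Cauchy--Schwarz applied to the vectors with entries $\sqrt{\mu_n}$ and $1/\sqrt{\mu_n}$), which delivers the bound in one line, yields the equality case $\mathbf{A}=a\mathbf{I}$ for free, and sidesteps the implicit step in the exchange argument of knowing that the infimum is actually attained; it is also more in the spirit of the paper's own proof of Lemma~\ref{lem:2}, which already uses Cauchy--Schwarz on traces. One small point of care: real positive definiteness does not by itself force symmetry unless symmetry is built into the definition, so your parenthetical ``hence symmetric'' is a convention rather than a deduction; in the paper's application $\mathbf{A}=\Re\{\mathbf{Z}_{II}\}$ is symmetric, so the spectral decomposition you use is legitimate, but it is cleaner to state symmetry as part of the hypothesis.
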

\begin{proof}Please refer to Appendix~B.\end{proof}

\begin{lemma}
Given a positive definite matrix $\mathbf{A}\in\mathbb{R}^{N\times N}$ with diagonal elements $[\mathbf{A}]_{n,n}=a$, for $n=1,\ldots,N$, it holds
\begin{equation}
\emph{Tr}\left(\mathbf{A}^{-2}\right)\geq\frac{N}{a^2}.\label{eq:lem2}
\end{equation}
\label{lem:2}
\end{lemma}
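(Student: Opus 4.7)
The plan is to combine the Cauchy--Schwarz inequality, applied to the eigenvalues of $\mathbf{A}^{-1}$, with the bound already established in Lemma~\ref{lem:1}. Since $\mathbf{A}$ is positive definite and symmetric, it admits an eigendecomposition $\mathbf{A}=\mathbf{U}\,\text{diag}(\lambda_1,\ldots,\lambda_N)\,\mathbf{U}^T$ with strictly positive eigenvalues $\lambda_n$, so $\text{Tr}(\mathbf{A}^{-1})=\sum_{n=1}^{N}1/\lambda_n$ and $\text{Tr}(\mathbf{A}^{-2})=\sum_{n=1}^{N}1/\lambda_n^2$, both well-defined.

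First, I would apply the Cauchy--Schwarz inequality to the vectors $(1/\lambda_1,\ldots,1/\lambda_N)$ and $(1,\ldots,1)$ to obtain $\bigl(\sum_{n=1}^{N}1/\lambda_n\bigr)^2\le N\sum_{n=1}^{N}1/\lambda_n^2$, i.e.,
\begin{equation*}
\text{Tr}(\mathbf{A}^{-2})\ge \frac{\text{Tr}(\mathbf{A}^{-1})^2}{N}.
\end{equation*}
Second, I would invoke Lemma~\ref{lem:1} to lower bound the right-hand side as $\text{Tr}(\mathbf{A}^{-1})\ge N/a$. Squaring and substituting yields
\begin{equation*}
\text{Tr}(\mathbf{A}^{-2})\ge \frac{(N/a)^2}{N}=\frac{N}{a^2},
\end{equation*}
which is the claim.

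No step here is truly delicate: the main point is to observe that $1/\lambda_n^2$ is well-defined (positive definiteness of $\mathbf{A}$) and that Cauchy--Schwarz exactly bridges $\text{Tr}(\mathbf{A}^{-1})$ and $\text{Tr}(\mathbf{A}^{-2})$. As a sanity check on tightness, both Cauchy--Schwarz and Lemma~\ref{lem:1} become equalities simultaneously when $\lambda_1=\cdots=\lambda_N=a$, i.e., when $\mathbf{A}=a\mathbf{I}$, consistent with equality in \eqref{eq:lem2}. An equivalent and slightly more self-contained route is a direct application of Jensen's inequality to the convex function $f(x)=1/x^2$ on the positive eigenvalues, using $\text{Tr}(\mathbf{A})=Na$ to get $\frac{1}{N}\sum_{n=1}^{N}1/\lambda_n^2\ge 1/\bigl(\frac{1}{N}\sum_{n=1}^{N}\lambda_n\bigr)^2=1/a^2$; this mirrors the likely Jensen-based proof of Lemma~\ref{lem:1}, but the Cauchy--Schwarz route has the pedagogical advantage of explicitly chaining the two lemmas.
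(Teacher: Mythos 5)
Your proof is correct and follows essentially the same route as the paper: first the Cauchy--Schwarz bound $\mathrm{Tr}(\mathbf{A}^{-2})\geq\mathrm{Tr}(\mathbf{A}^{-1})^2/N$, then substitution of the bound $\mathrm{Tr}(\mathbf{A}^{-1})\geq N/a$ from Lemma~\ref{lem:1}. The extra detail you provide (explicit eigendecomposition and the equality/tightness check at $\mathbf{A}=a\mathbf{I}$) is a fine addition but not a different argument.
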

\begin{proof}Please refer to Appendix~C.\end{proof}

By using the results in Lemmas~\ref{lem:1} and \ref{lem:2}, we can derive the following proposition, stating that mutual coupling is beneficial as it improves the average channel gain.

\begin{proposition}
Under independent Rayleigh fading channels, mutual coupling between the RIS elements improves the average channel gain, i.e., 
\begin{equation}
\mathbb{E}\left[\left\vert h_{MC}^\star\right\vert^2\right]\geq\mathbb{E}\left[\left\vert h_{NoMC}^\star\right\vert^2\right],
\end{equation}
where $\mathbb{E}[\vert h_{MC}^\star\vert^2]$ and $\mathbb{E}[\vert h_{NoMC}^\star\vert^2]$ are given by \eqref{eq:E-UB-MC-CF} and \eqref{eq:E-UB-NoMC-CF}.
\label{pro:impact}
\end{proposition}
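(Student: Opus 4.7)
The plan is to compare the two scaling laws term by term by leveraging Lemmas~\ref{lem:1} and \ref{lem:2} with the specific choice $\mathbf{A}=\Re\{\mathbf{Z}_{II}\}$ and $a=Z_{II}$. First I would observe that, for an array of identical antennas, the diagonal entries of $\Re\{\mathbf{Z}_{II}\}$ are all equal to the real part of the RIS antenna self-impedance $Z_{II}$, which is exactly the scalar appearing in \eqref{eq:E-UB-NoMC-CF}. Moreover, $\Re\{\mathbf{Z}_{II}\}$ is positive definite (by Proposition~\ref{pro:semi-definite} together with the invertibility assumption already made in Section~\ref{sec:FC}), so the hypotheses of both lemmas are satisfied.

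Next I would apply Lemma~\ref{lem:1} to obtain $\text{Tr}(\Re\{\mathbf{Z}_{II}\}^{-1})\geq N_I/Z_{II}$, and Lemma~\ref{lem:2} to obtain $\text{Tr}(\Re\{\mathbf{Z}_{II}\}^{-2})\geq N_I/Z_{II}^2$. These two inequalities can then be substituted directly into the three summands appearing in \eqref{eq:E-UB-MC-CF}: the first term is lower-bounded by $N_I/Z_{II}^2$, the second by $N_I^2/Z_{II}^2$ (squaring preserves the inequality since both sides are non-negative), and the cross term by $\sqrt{\pi N_I}\,N_I/Z_{II}^2$ (since the square root is monotone on non-negative reals and the product of two non-negative lower-bounded quantities is bounded by the product of their bounds).

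Summing the three lower bounds and multiplying by the common prefactor $\rho_{RI}\rho_{IT}/(16Z_0^2)$, the right-hand side collapses to
\begin{equation}
\frac{\rho_{RI}\rho_{IT}}{16Z_0^2 Z_{II}^2}\bigl(N_I+N_I^2+\sqrt{\pi N_I}\,N_I\bigr),
\end{equation}
which is exactly $\mathbb{E}[\vert h_{NoMC}^\star\vert^2]$ as given in \eqref{eq:E-UB-NoMC-CF}. This would conclude the proof.

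I do not anticipate a genuine obstacle: the argument is essentially a mechanical termwise application of the two lemmas once the identification of the constant $a=Z_{II}$ with the common diagonal entry of $\Re\{\mathbf{Z}_{II}\}$ is made. The only subtle point worth flagging explicitly is that Lemmas~\ref{lem:1}--\ref{lem:2} rely on positive-definiteness (not just positive semi-definiteness), which is why the invertibility of $\Re\{\mathbf{Z}_{II}\}$ assumed in Section~\ref{sec:FC} must be invoked here as well.
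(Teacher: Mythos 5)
Your proposal is correct and follows essentially the same route as the paper's own proof: a termwise comparison of the two scaling laws using Lemma~\ref{lem:1} and Lemma~\ref{lem:2} with $\mathbf{A}=\Re\{\mathbf{Z}_{II}\}$ and $a=Z_{II}$, with the cross and squared terms handled by monotonicity of the square and square root. Your explicit remark that positive definiteness (rather than mere semi-definiteness) is needed, via the invertibility of $\Re\{\mathbf{Z}_{II}\}$, matches the paper's reasoning and is the right subtlety to flag.
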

\begin{proof}Please refer to Appendix~D.\end{proof}

According to Proposition~\ref{pro:impact}, mutual coupling is always beneficial in enhancing the average channel gain.
Note that this result holds for any mutual coupling matrix $\mathbf{Z}_{II}$ of a RIS whose elements have all the same self-impedance $Z_{II}$, which can also be $Z_{II}\neq Z_0$ in the case of impedance mismatching.
Thus, it is valid for any geometry of the RIS antenna array as long as all RIS antenna elements are of the same type, e.g., dipole or patch antenna, and also in the presence of impedance mismatch.

To relate Proposition~\ref{pro:impact} to previous research on RIS with mutual coupling, we make the following two remarks.
\textit{First}, it is worth emphasizing that the presence of mutual coupling enhances the channel gain, on average.
However, mutual coupling can be beneficial for certain channel realizations and detrimental for others.
In light of this, our result does not conflict with previous works where mutual coupling was found to be detrimental for a specific channel realization \cite{qia21,li24}.
\textit{Second}, Proposition~\ref{pro:impact} states that mutual coupling is beneficial when the RIS is a fully- or tree-connected RIS, and its mutual coupling is accounted for during the RIS optimization process.
Thus, it does not give any insight into the effect of mutual coupling on conventional D-RIS, whose analytical characterization remains an open problem up to our best knowledge.
In addition, the presence of mutual coupling could reduce the channel gain if mutual coupling is not considered in the RIS optimization process.
The performance of conventional D-RIS and the performance degradation due to a mutual coupling-unaware optimization are numerically investigated in Section~\ref{sec:results}.

\section{Numerical Results}
\label{sec:results}

This section provides simulation results to validate the derived global optimal closed-form solutions for fully- and tree-connected RISs and the channel gain scaling laws.
We consider a RIS implemented as a uniform planar array (UPA) of radiating elements located in the $x$-$y$ plane, with dimensions $N_x\times N_y$, where $N_x=8$ and $N_y=N_I/8$, and with inter-element distance $d$.
The RIS elements are thin wire dipoles parallel to the $y$ axis with length $\ell=\lambda/4$ and radius $r\ll\ell$, where $\lambda=c/f$ is the wavelength of the frequency $f=28$~GHz, and $c$ is the speed of light.
All the RIS elements are assumed to be perfectly matched to $Z_0=50\;\Omega$, giving $\left[\mathbf{Z}_{II}\right]_{n_I,n_I}=Z_0$, for $n_I=1,\ldots,N_I$.
Besides, the $(q,p)$th entry of $\mathbf{Z}_{II}$, with $q\neq p$, represent the mutual coupling between the RIS element $p$ located in $(x_p,y_p)$ and the RIS element $q$ located in $(x_q,y_q)$.
Following \cite{gra21,li24}, $\left[\mathbf{Z}_{II}\right]_{q,p}=\left[\mathbf{Z}_{II}\right]_{p,q}$ is modeled as
\begin{multline}
\left[\mathbf{Z}_{II}\right]_{q,p}=
\int_{y_q-\frac{\ell}{2}}^{y_q+\frac{\ell}{2}}
\int_{y_p-\frac{\ell}{2}}^{y_p+\frac{\ell}{2}}
\frac{j\eta_0}{4\pi k_0}
\left(\frac{\left(y''-y'\right)^2}{d_{q,p}^2}\right.\\
\left.\times\left(\frac{3}{d_{q,p}^2}+\frac{3jk_0}{d_{q,p}}-k_0^2\right)
-\frac{jk_0+d_{q,p}^{-1}}{d_{q,p}}+k_0^2\right)
\frac{e^{-jk_0d_{q,p}}}{d_{q,p}}\\
\times\frac{
\sin\left(k_0\left(\frac{\ell}{2}-\left\vert y'-y_p\right\vert\right)\right)
\sin\left(k_0\left(\frac{\ell}{2}-\left\vert y''-y_q\right\vert\right)\right)}
{\sin^2\left(k_0\frac{\ell}{2}\right)}dy'dy'',
\end{multline}
where $\eta_0=377\;\Omega$ is the impedance of free space, $k_0=2\pi/\lambda$ is the wavenumber, and $d_{q,p}=\sqrt{(x_q-x_p)^2+(y''-y')^2}$.
We consider the direct link between the transmitter and receiver to be fully obstructed, i.e., $z_{RT}=0$, and generate $\mathbf{z}_{RI}$ and $\mathbf{z}_{IT}$ as independent Rayleigh distributed, i.e., $\mathbf{z}_{RI}\sim\mathcal{CN}(\mathbf{0},\rho_{RI}\mathbf{I})$ and $\mathbf{z}_{IT}\sim\mathcal{CN}(\mathbf{0},\rho_{IT}\mathbf{I})$, with path gains $\rho_{RI}=\rho_{IT}=4Z_0^{2}10^{-8}$.
Although the path gains $\rho_{RI}$ and $\rho_{IT}$ depend on the transmission distances, we directly set the value of $\rho_{RI}$ and $\rho_{IT}$ since they impact the channel gain by just scaling it by the product $\rho_{RI}\rho_{IT}$.

\begin{figure}[t]
\centering
\includegraphics[width=0.42\textwidth]{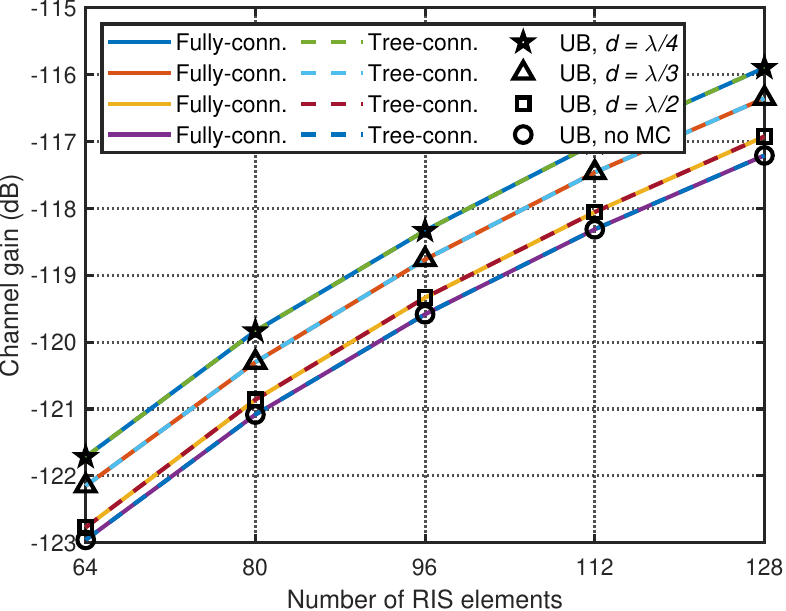}
\caption{Channel gain versus the number of RIS elements for different values of inter-element distance $d$.}
\label{fig:gain}
\end{figure}

\begin{figure*}[t]
\centering
\subfigure[Mean of $\vert\mathbf{z}_{RI}\Re\{\mathbf{Z}_{II}\}^{-1}\mathbf{z}_{IT}\vert^2$ and $\frac{1}{Z_{II}^2}\vert\mathbf{z}_{RI}\mathbf{z}_{IT}\vert^2$.]{
\includegraphics[width=0.42\textwidth]{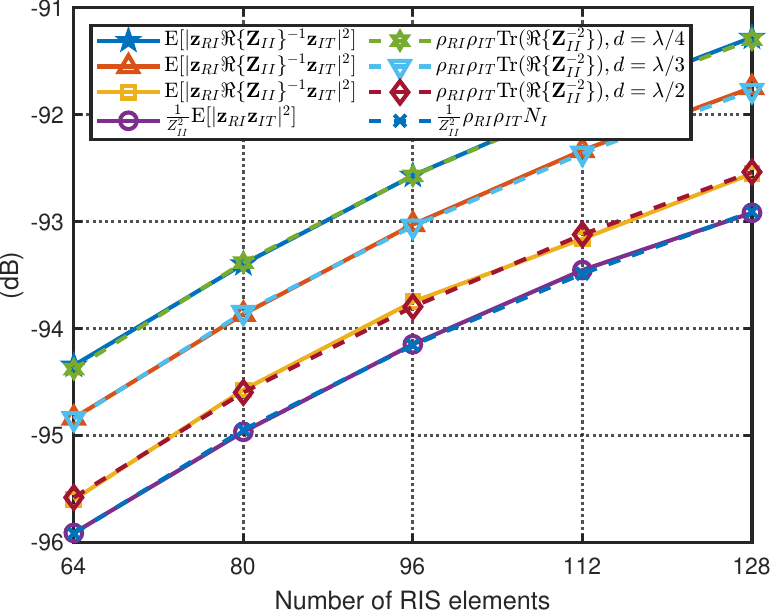}
}
\subfigure[Mean of $\Vert\mathbf{z}_{RI}\Re\{\mathbf{Z}_{II}\}^{-1/2}\Vert_2^2$ and $\frac{1}{Z_{II}}\Vert\mathbf{z}_{RI}\Vert_2^2$.]{
\includegraphics[width=0.42\textwidth]{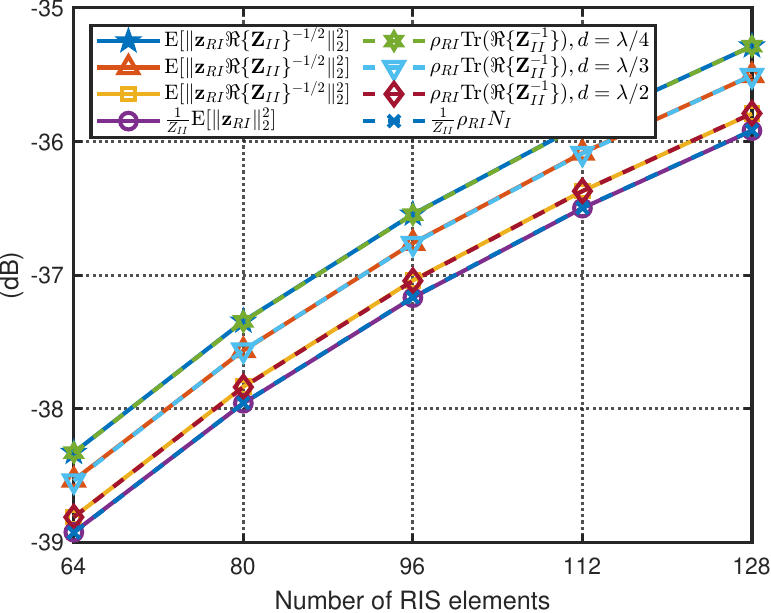}
}
\subfigure[Mean of $\vert\mathbf{z}_{RI}\Re\{\mathbf{Z}_{II}\}^{-1}\mathbf{z}_{IT}\vert$ and $\frac{1}{Z_{II}}\vert\mathbf{z}_{RI}\mathbf{z}_{IT}\vert$.]{
\includegraphics[width=0.42\textwidth]{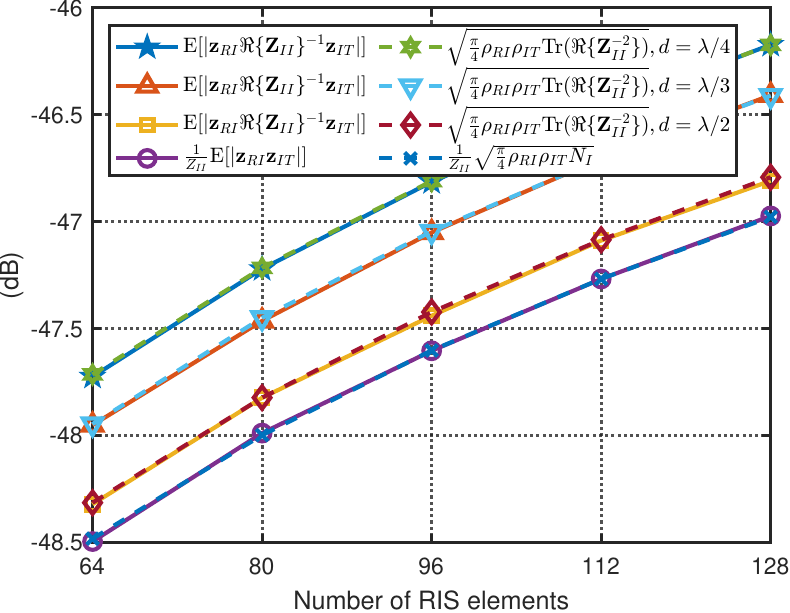}
}
\subfigure[Mean of $\Vert\mathbf{z}_{RI}\Re\{\mathbf{Z}_{II}\}^{-1/2}\Vert_2$ and $\frac{1}{\sqrt{Z_{II}}}\Vert\mathbf{z}_{RI}\Vert_2$.]{
\includegraphics[width=0.42\textwidth]{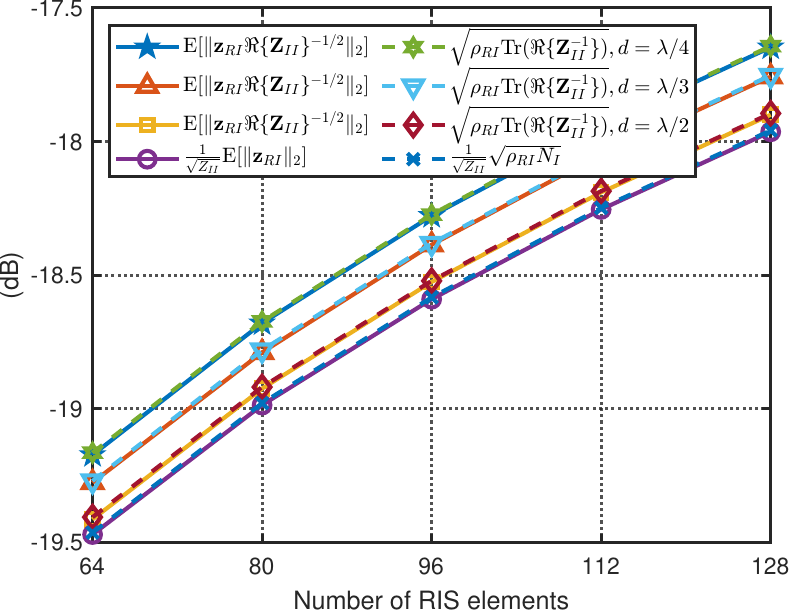}
}
\caption{Simulated mean values of the terms appearing in the channel gain expression and corresponding theoretical closed-form.}
\label{fig:scaling-laws-terms}
\end{figure*}

\begin{figure}[t]
\centering
\includegraphics[width=0.42\textwidth]{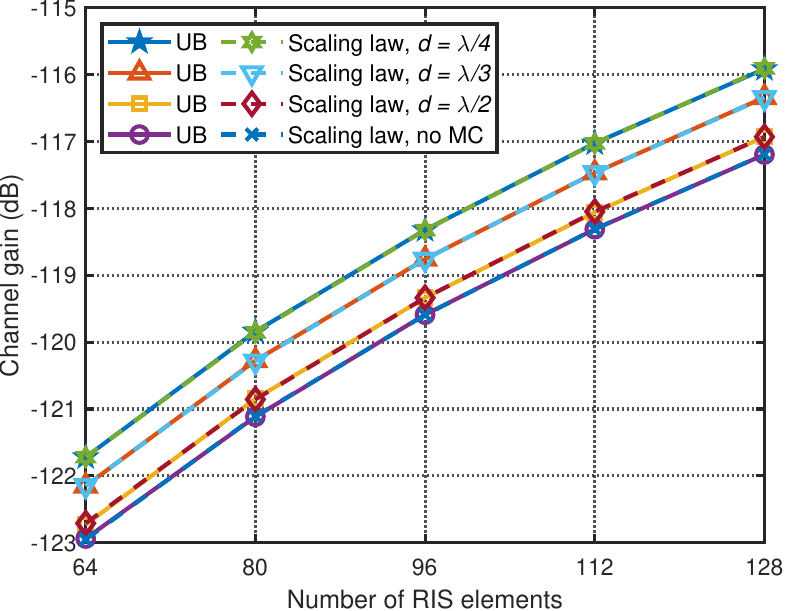}
\caption{Channel gain upper bounds and their scaling laws versus the number of RIS elements for different values of inter-element distance $d$.}
\label{fig:scaling-laws}
\end{figure}

\begin{figure*}[t]
\centering
\subfigure[$N_I=64$ RIS elements.]{
\includegraphics[width=0.42\textwidth]{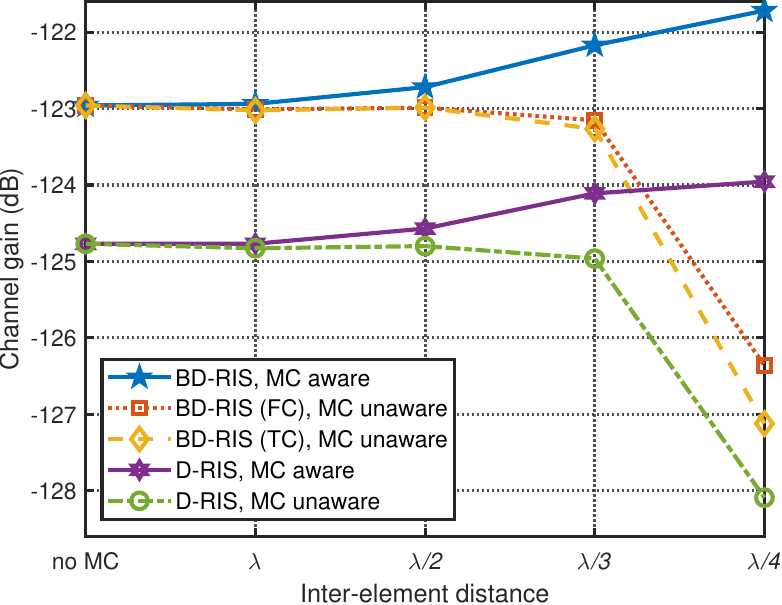}
\label{fig:MC-unaware-opt-Z-a}
}
\subfigure[$N_I=128$ RIS elements.]{
\includegraphics[width=0.42\textwidth]{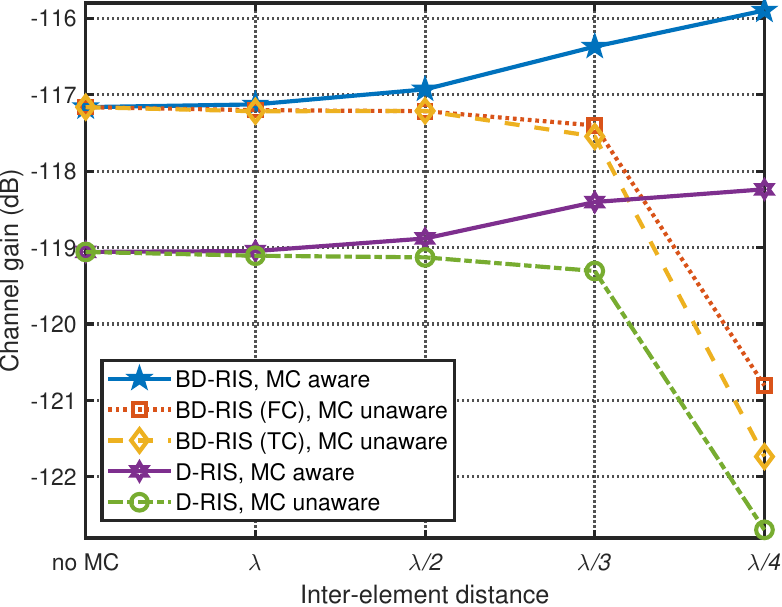}
\label{fig:MC-unaware-opt-Z-b}
}
\caption{Channel gain versus the inter-element distance $d$ achieved by BD-RIS and D-RIS reconfigured through mutual coupling-aware (denoted as ``MC aware'') and -unaware (denoted as ``MC unaware'') optimization.
In the case of coupling-unaware optimization, we assume the channels in the $Z$-parameters $\mathbf{z}_{RI}$, $\mathbf{z}_{IT}$, and $z_{RT}$ to be known.}
\label{fig:MC-unaware-opt-Z}
\end{figure*}

\subsection{Optimizing Fully- and Tree-Connected RISs}

In Fig.~\ref{fig:gain}, we report the channel gain obtained by fully- and tree-connected RISs, for different values of inter-element distance $d\in[\lambda/2,\lambda/3,\lambda/4]$ together with its upper bound ``UB'' given by \eqref{eq:UB-MC}.
We also report the channel gain achieved by fully- and tree-connected RISs with no mutual coupling, i.e., with $\mathbf{Z}_{II}=Z_0\mathbf{I}$, by optimizing them with the global optimal solutions proposed in \cite{ner22} and \cite{ner23-1}, respectively, together with the channel gain upper bound ``UB'' given by \eqref{eq:UB-NoMC} where we set $Z_{II}=Z_0$.
We identify the following three remarks.
\textit{First}, both the fully-connected and the tree-connected RISs exactly achieve the channel gain upper bound, confirming the effectiveness of our global optimal closed-form solutions.
\textit{Second}, the presence of mutual coupling allows the BD-RISs to achieve higher channel gain over the case with no mutual coupling, as stated in Proposition~\ref{pro:impact}.
Thus, these numerical results confirm that mutual coupling is beneficial in BD-RIS-aided systems when it is exploited through mutual coupling-aware optimization algorithms.
\textit{Third}, smaller inter-element distances enable higher channel gains since they cause stronger mutual coupling.

\subsection{Scaling Laws}

In Section~\ref{sec:scaling-laws}, we have studied the mean value of the channel gain upper bounds \eqref{eq:UB-MC} and \eqref{eq:UB-NoMC}, i.e., their scaling laws, which are provided in closed-form in \eqref{eq:E-UB-MC-CF} and \eqref{eq:E-UB-NoMC-CF}, respectively.
We now verify the accuracy of these scaling laws.
To this end, in Fig.~\ref{fig:scaling-laws-terms}, we individually analyze the expectation terms appearing in \eqref{eq:E-UB-MC} and \eqref{eq:E-UB-NoMC}, reporting their simulated mean and the theoretical mean derived in closed-form in Section~\ref{sec:scaling-laws}.
We observe that the theoretical mean values are accurate for practical numbers of RIS elements, such as $N_I=64$, even if derived with approximations valid at $N_I\rightarrow\infty$.
Additional numerical results have shown that the scaling laws remain accurate for a number of RIS elements as low as $N_I=16$.

In Fig.~\ref{fig:scaling-laws}, we compare the simulated mean values of channel gain upper bounds ``UB'' given in \eqref{eq:UB-MC} and \eqref{eq:UB-NoMC} and their scaling laws provided by \eqref{eq:E-UB-MC-CF} and \eqref{eq:E-UB-NoMC-CF}, respectively.
We observe that the scaling laws provided in \eqref{eq:E-UB-MC-CF} and \eqref{eq:E-UB-NoMC-CF} exhibit remarkable accuracy when compared to the simulated mean values of the channel gain upper bounds.
This high level of agreement confirms that the scaling laws are not only theoretically sound but also highly reliable for practical scenarios, even with a finite number of RIS elements.
While Rayleigh distributed channels are considered in Fig.~\ref{fig:scaling-laws}, it can be shown that our scaling laws have the same level of accuracy also under Rician distributed channels, including Rayleigh and \gls{los} channels as specific cases.

\subsection{Benefits of BD-RIS and Mutual Coupling-Aware Optimization}

So far, we have analyzed the performance of BD-RIS-aided systems, where the BD-RIS is globally optimized accounting for the mutual coupling effects.
We now extend the comparison to include D-RIS-aided systems, as well as systems where the RIS is optimized without accounting for mutual coupling.
In Fig.~\ref{fig:MC-unaware-opt-Z}, we report the channel gain achieved by BD-RIS and D-RIS, when they are optimized through mutual coupling-aware as well as -unaware algorithms as detailed in the following.
\begin{itemize}
\item For BD-RIS reconfigured with mutual-coupling aware optimization, the performance is given by the channel gain upper bound in \eqref{eq:UB-MC}, both for fully- and tree-connected RISs.
\item For BD-RIS optimized in a mutual-coupling unaware fashion, we assume that $\mathbf{Z}_{II}=Z_0\mathbf{I}$ during the optimization phase, such that the channel \eqref{eq:h-Z} is given by
\begin{equation}
h=\frac{1}{2Z_0}(z_{RT}-\mathbf{z}_{RI}(j\mathbf{X}_{I}+Z_0\mathbf{I})^{-1}\mathbf{z}_{IT}).\label{eq:h-Z-unaware}
\end{equation}
Thus, following \cite{ner23-2}, we introduce
\begin{gather}
\mathbf{s}_{RI}=\frac{\mathbf{z}_{RI}}{2Z_0},\;
\mathbf{s}_{IT}=\frac{\mathbf{z}_{IT}}{2Z_0},\label{eq:sRI-sIT-unaware}\\
s_{RT}=\frac{1}{2Z_0}\left(z_{RT}-\frac{\mathbf{z}_{RI}\mathbf{z}_{IT}}{2Z_0}\right),\label{eq:sRT-unaware}\\
\boldsymbol{\Theta}=\left(j\mathbf{X}_I+Z_0\mathbf{I}\right)^{-1}\left(j\mathbf{X}_I-Z_0\mathbf{I}\right),\label{eq:T-unaware}
\end{gather}
to rewrite \eqref{eq:h-Z-unaware} equivalently as
\begin{equation}
h=s_{RT}+\mathbf{s}_{RI}\boldsymbol{\Theta}\mathbf{s}_{IT},\label{eq:h-S-unaware}
\end{equation}
and we optimize $\boldsymbol{\Theta}$ to maximize $\vert h\vert^2$ by using the global optimal solution proposed in \cite{ner22} and \cite{ner23-1} for fully- ``FC'' and tree-connected ``TC'' RISs, respectively.
From the obtained $\boldsymbol{\Theta}$, the susceptance matrix $\mathbf{X}_I$ is obtained by inverting \eqref{eq:T-unaware}, which is plugged into the channel model in \eqref{eq:h-Z} to get the channel gain of a BD-RIS-aided system with mutual coupling-unaware optimization.
\item For D-RIS optimized through a mutual-coupling aware method, we consider the optimization method proposed in \cite{qia21}, which has a computational complexity of $\mathcal{O}(IN_I^3)$, where $I$ is the number of iterations needed to converge.
\item For D-RIS optimized in a mutual-coupling unaware way, we rewrite the channel $h$ as in \eqref{eq:h-S-unaware} by introducing $\mathbf{s}_{RI}$, $\mathbf{s}_{IT}$, $s_{RT}$, and $\boldsymbol{\Theta}$ as in \eqref{eq:sRI-sIT-unaware}-\eqref{eq:T-unaware}.
Thus, $\boldsymbol{\Theta}=\mathrm{diag}(e^{j\theta_{1}},\ldots,e^{j\theta_{N_I}})$ is reconfigured by setting $\theta_{n_I}=\arg(s_{RT})-\arg([\mathbf{s}_{RI}]_{n_I}[\mathbf{s}_{IT}]_{n_I})$, $\forall n_I$, and $\mathbf{X}_I$ is accordingly computed by inverting \eqref{eq:T-unaware}.
\end{itemize}
We make the following observations from Fig.~\ref{fig:MC-unaware-opt-Z}.
\textit{First}, when optimizing a BD-RIS or D-RIS accounting for mutual coupling, the performance increases with the mutual coupling strength, i.e., as the inter-element distance decreases.
However, when BD-RIS or D-RIS are optimized not being aware of the mutual coupling, their performance dramatically drops for short inter-element distances, experiencing a degradation of up to 5~dB for BD-RIS and 4~dB for D-RIS.
This is because the RIS is optimized based on a channel model with no mutual coupling, which differs from the physics-consistent channel model including the mutual coupling effects.
\textit{Second}, the gain of BD-RIS over D-RIS is approximately 2~dB when optimized accounting for mutual coupling, which is $1.6$ times in linear scale, consistently with the gain derived in \cite{she20} in the absence of mutual coupling.
This gain slightly increases with the mutual coupling strength, i.e., as the inter-element distance decreases, since BD-RIS offers additional flexibility to deal with mutual coupling.
\textit{Third}, the observed trends of the channel gain versus the inter-element distance are independent of the number of RIS elements, as it can be seen by comparing Fig.~\ref{fig:MC-unaware-opt-Z-a} and \ref{fig:MC-unaware-opt-Z-b}, where we fix $N_I=64$ and $N_I=128$, respectively.
\textit{Fourth}, the impact of mutual coupling can be approximately neglected when the inter-element distance is larger than half-wavelength, i.e., $\lambda/2$, and the RIS can be successfully optimized without considering mutual coupling in this case.

\section{Conclusion}
\label{sec:conclusion}

We globally optimize BD-RIS to maximize the channel gain in RIS-aided systems in the presence of \gls{em} mutual coupling between the RIS elements.
Specifically, we propose global optimal closed-form solutions to optimize fully- and tree-connected RISs.
We prove that fully- and tree-connected RISs achieve the same performance, confirming the benefit of the tree-connected RIS as a very low-complexity BD-RIS architecture that can achieve the maximum benefits.
In addition, we provide the expression of the channel gain achievable in the presence of mutual coupling and its scaling law in closed form.
Given this scaling law, we analytically prove that mutual coupling is beneficial in enhancing the average channel gain when considering Rayleigh distributed channels.

Numerical results are provided to support the theoretical derivations, showing that both fully- and tree-connected RISs with mutual coupling can achieve the channel gain upper bound.
Furthermore, we observe that stronger mutual coupling effects lead to a further enhanced channel gain, in agreement with the theoretical intuition.
We observe that BD-RIS offers a 2~dB gain over D-RIS in terms of channel gain and that mutual coupling-unaware optimization of the RIS can cause a channel gain degradation as high as 5~dB.

Future research directions include the extension of our results to multi-antenna and multi-user systems, which is challenging since closed-form solutions and scaling laws are often not known even in the absence of mutual coupling.
In addition, more realistic studies could analyze RIS in the presence of mutual coupling jointly with other hardware impairments and non-idealities, such as losses, discrete-value tunable components, and imperfect \gls{csi}.
Experimental validation is needed to verify whether the derived performance upper bound can be achieved in real-world scenarios.

\section*{Appendix}

\subsection{Proof of Proposition~\ref{pro:semi-definite}}

Since the network is lossy, the real power delivered to the network (dissipated) must be non-negative.
The real power dissipated by an $N$-port network is defined as
\begin{equation}
P=\frac{1}{2}\Re\left\{\mathbf{v}^T\mathbf{i}^*\right\},\label{eq:P}
\end{equation}
where $\mathbf{v}=[v_1,\ldots,v_N]^T\in\mathbb{C}^{N\times1}$ and $\mathbf{i}=[i_1,\ldots,i_N]^T\in\mathbb{C}^{N\times1}$ are the voltage and current vectors at the $N$ ports, respectively \cite[Chapter 4]{poz11}.
Furthermore, by recalling that $\mathbf{v}=\mathbf{Z}\mathbf{i}$, we have
\begin{align}
P
&=\frac{1}{2}\Re\left\{\mathbf{i}^T\mathbf{Z}\mathbf{i}^*\right\}\\
&=\frac{1}{2}\sum_{n=1}^N\Re\left\{\left\vert i_n\right\vert^2\left[\mathbf{Z}\right]_{n,n}+\sum_{m\neq n}i_ni_m^*\left[\mathbf{Z}\right]_{n,m}\right\},
\end{align}
where we exploited the fact that $\mathbf{Z}$ is symmetric for reciprocal networks.
Since $\mathbf{Z}$ is symmetric, we have $[\mathbf{Z}]_{m,n}=[\mathbf{Z}]_{n,m}$, yielding
\begin{equation}
P=\frac{1}{2}\sum_{n=1}^N\Re\left\{\left\vert i_n\right\vert^2\left[\mathbf{Z}\right]_{n,n}+\sum_{m>n}\left(i_ni_m^*+i_mi_n^*\right)\left[\mathbf{Z}\right]_{n,m}\right\},
\end{equation}
and noticing that $\vert i_n\vert^2$ and $i_ni_m^*+i_mi_n^*$ are purely real, we can write
\begin{multline}
P=\frac{1}{2}\sum_{n=1}^N\left(\left\vert i_n\right\vert^2\left[\Re\left\{\mathbf{Z}\right\}\right]_{n,n}\right.\\
\left.+\sum_{m>n}\left(i_ni_m^*+i_mi_n^*\right)\left[\Re\left\{\mathbf{Z}\right\}\right]_{n,m}\right).
\end{multline}
Thus, we can exploit again the symmetry of $\mathbf{Z}$ to express the real power $P$ as
\begin{equation}
P=\frac{1}{2}\mathbf{i}^T\Re\left\{\mathbf{Z}\right\}\mathbf{i}^*\label{eq:psp}.
\end{equation}
For a lossy network, the real power must be non-negative, i.e., $P\geq0$, for any current vector $\mathbf{i}$, indicating that $\Re\{\mathbf{Z}\}$ is positive semi-definite because of \eqref{eq:psp}.

A similar discussion can be also repeated to show that $\Re\{\mathbf{Y}\}$ is positive semi-definite.
To this end, we can exploit the relationship $\mathbf{i}=\mathbf{Y}\mathbf{v}$ to express the real power defined in \eqref{eq:P} as $P=\Re\{\mathbf{v}^T\mathbf{Y}^*\mathbf{v}^*\}/2$.
The rest of the proof is identical to the one provided for $\Re\{\mathbf{Z}\}$.

\subsection{Proof of Lemma~\ref{lem:1}}

To prove Lemma~\ref{lem:1}, we denote the $N$ eigenvalues of $\mathbf{A}$ as $\lambda_1,\ldots,\lambda_N$, where $\lambda_n>0$, for $n=1,\ldots,N$, given the positive definiteness of $\mathbf{A}$.
The sum of these eigenvalues is given by
\begin{equation}
\sum_{n=1}^N\lambda_n=\text{Tr}\left(\mathbf{A}\right),\label{eq:TA}
\end{equation}
which is $\text{Tr}\left(\mathbf{A}\right)=aN$ since $[\mathbf{A}]_{n,n}=a$, for $n=1,\ldots,N$.
Besides, $\lambda_1^{-1},\ldots,\lambda_N^{-1}$ are the $N$ eigenvalues of $\mathbf{A}^{-1}$, whose sum is given by
\begin{equation}
\sum_{n=1}^N\lambda_n^{-1}=\text{Tr}\left(\mathbf{A}^{-1}\right).\label{eq:TA-1}
\end{equation}
Following \eqref{eq:TA} and \eqref{eq:TA-1}, Lemma~\ref{lem:1} can be proved by showing that the minimization problem
\begin{equation}
\underset{\lambda_1,\ldots,\lambda_N}{\mathsf{\mathrm{min}}}\;\;
\sum_{n=1}^N\lambda_n^{-1}\;\;
\mathsf{\mathrm{s.t.}}\;\;
\sum_{n=1}^N\lambda_n=aN,\;\lambda_n>0,\label{eq:minTA-1}
\end{equation}
is solved when all the eigenvalues are equal, i.e., $\lambda_1=\ldots=\lambda_N=a$, giving $\sum_{n=1}^N\lambda_n^{-1}\geq N/a$.

Such a solution for the problem in \eqref{eq:minTA-1} can be proved by contradiction as follows.
Suppose that $\sum_{n=1}^N\lambda_n^{-1}$ is minimized when the $i$th and $j$th eigenvalues are $\lambda_i=\lambda_i^\star$ and $\lambda_j=\lambda_j^\star$, with $\lambda_i^\star\neq\lambda_j^\star$.
In this case, we have
\begin{equation}
\frac{1}{\lambda_i^\star}+
\frac{1}{\lambda_j^\star}>
\frac{1}{\frac{\lambda_i^\star+\lambda_j^\star}{2}}+
\frac{1}{\frac{\lambda_i^\star+\lambda_j^\star}{2}},
\end{equation}
following $(\lambda_i^\star-\lambda_j^\star)^2>0$.
Thus, the solution $\lambda_i=\lambda_j=(\lambda_i^\star+\lambda_j^\star)/2$ attains a lower value of $\sum_{n=1}^N\lambda_n^{-1}$, in contradiction with the hypothesis $\lambda_i=\lambda_i^\star$ and $\lambda_j=\lambda_j^\star$, with $\lambda_i^\star\neq\lambda_j^\star$.
This contradiction shows that \eqref{eq:minTA-1} is solved when $\lambda_1=\ldots=\lambda_N=a$, proving Lemma~\ref{lem:1}.

\subsection{Proof of Lemma~\ref{lem:2}}

This lemma follows from the Cauchy–Schwarz inequality applied to the trace operator and Lemma~\ref{lem:1}.
First, following the Cauchy–Schwarz inequality, we have
\begin{equation}
\text{Tr}\left(\mathbf{A}^{-2}\right)\geq\frac{\text{Tr}\left(\mathbf{A}^{-1}\right)^2}{N}.\label{eq:cs}
\end{equation}
Second, by substituting \eqref{eq:lem1} into \eqref{eq:cs}, we readily obtain \eqref{eq:lem2}.

\subsection{Proof of Proposition~\ref{pro:impact}}

To prove that $\mathbb{E}[\vert h_{MC}^\star\vert^2]\geq\mathbb{E}[\vert h_{NoMC}^\star\vert^2]$, it is sufficient to individually prove the following four inequalities
\begin{align}
\text{Tr}\left(\Re\{\mathbf{Z}_{II}\}^{-2}\right)&\geq\frac{N_I}{Z_{II}^2},\label{eq:1}\\
\text{Tr}\left(\Re\{\mathbf{Z}_{II}\}^{-1}\right)^2&\geq\frac{N_I^2}{Z_{II}^2},\label{eq:2}\\
\sqrt{\pi\text{Tr}\left(\Re\{\mathbf{Z}_{II}\}^{-2}\right)}&\geq\frac{\sqrt{\pi N_I}}{Z_{II}},\label{eq:3}\\
\text{Tr}\left(\Re\{\mathbf{Z}_{II}\}^{-1}\right)&\geq\frac{N}{Z_{II}},\label{eq:4}
\end{align}
which can be done by applying Lemmas~\ref{lem:1} and \ref{lem:2}.
The inequality in \eqref{eq:1} can be proved by applying Lemma~\ref{lem:2} with $\mathbf{A}=\Re\{\mathbf{Z}_{II}\}$ and $a=Z_{II}$.
Note that $\Re\{\mathbf{Z}_{II}\}$ fulfills the hypothesis of Lemma~\ref{lem:2} since it is positive definite following Proposition~\ref{pro:semi-definite}.
Besides, the inequality in \eqref{eq:2} can be proved by proving $\text{Tr}(\Re\{\mathbf{Z}_{II}\}^{-1})\geq N/Z_{II}$, which follows from Lemma~\ref{lem:1}.
Finally, the inequalities in \eqref{eq:3} and \eqref{eq:4} directly follows from \eqref{eq:1} and \eqref{eq:2}, respectively, concluding the proof that $\mathbb{E}[\vert h_{MC}^\star\vert^2]\geq\mathbb{E}[\vert h_{NoMC}^\star\vert^2]$.

\bibliographystyle{IEEEtran}
\bibliography{IEEEabrv,main}

\begin{thebibliography}{10}
\providecommand{\url}[1]{#1}
\csname url@samestyle\endcsname
\providecommand{\newblock}{\relax}
\providecommand{\bibinfo}[2]{#2}
\providecommand{\BIBentrySTDinterwordspacing}{\spaceskip=0pt\relax}
\providecommand{\BIBentryALTinterwordstretchfactor}{4}
\providecommand{\BIBentryALTinterwordspacing}{\spaceskip=\fontdimen2\font plus
\BIBentryALTinterwordstretchfactor\fontdimen3\font minus \fontdimen4\font\relax}
\providecommand{\BIBforeignlanguage}[2]{{%
\expandafter\ifx\csname l@#1\endcsname\relax
\typeout{** WARNING: IEEEtran.bst: No hyphenation pattern has been}%
\typeout{** loaded for the language `#1'. Using the pattern for}%
\typeout{** the default language instead.}%
\else
\language=\csname l@#1\endcsname
\fi
#2}}
\providecommand{\BIBdecl}{\relax}
\BIBdecl

\bibitem{wu21}
Q.~Wu, S.~Zhang, B.~Zheng, C.~You, and R.~Zhang, ``Intelligent reflecting surface-aided wireless communications: A tutorial,'' \emph{IEEE Trans. Commun.}, vol.~69, no.~5, pp. 3313--3351, 2021.

\bibitem{li23-1}
H.~Li, S.~Shen, M.~Nerini, and B.~Clerckx, ``Reconfigurable intelligent surfaces 2.0: Beyond diagonal phase shift matrices,'' \emph{IEEE Commun. Mag.}, vol.~62, no.~3, pp. 102--108, 2024.

\bibitem{she20}
S.~Shen, B.~Clerckx, and R.~Murch, ``Modeling and architecture design of reconfigurable intelligent surfaces using scattering parameter network analysis,'' \emph{IEEE Trans. Wireless Commun.}, vol.~21, no.~2, pp. 1229--1243, 2022.

\bibitem{ner23-1}
M.~Nerini, S.~Shen, H.~Li, and B.~Clerckx, ``Beyond diagonal reconfigurable intelligent surfaces utilizing graph theory: Modeling, architecture design, and optimization,'' \emph{IEEE Trans. Wireless Commun.}, 2024.

\bibitem{ner23-3}
M.~Nerini and B.~Clerckx, ``Pareto frontier for the performance-complexity trade-off in beyond diagonal reconfigurable intelligent surfaces,'' \emph{IEEE Commun. Lett.}, vol.~27, no.~10, pp. 2842--2846, 2023.

\bibitem{li23-2}
H.~Li, S.~Shen, and B.~Clerckx, ``Beyond diagonal reconfigurable intelligent surfaces: From transmitting and reflecting modes to single-, group-, and fully-connected architectures,'' \emph{IEEE Trans. Wireless Commun.}, vol.~22, no.~4, pp. 2311--2324, 2023.

\bibitem{li23-3}
------, ``Beyond diagonal reconfigurable intelligent surfaces: A multi-sector mode enabling highly directional full-space wireless coverage,'' \emph{IEEE J. Sel. Areas Commun.}, vol.~41, no.~8, pp. 2446--2460, 2023.

\bibitem{li22}
Q.~Li \emph{et~al.}, ``Reconfigurable intelligent surfaces relying on non-diagonal phase shift matrices,'' \emph{IEEE Trans. Veh. Technol.}, vol.~71, no.~6, pp. 6367--6383, 2022.

\bibitem{wan24}
H.~Wang, Z.~Han, and A.~L. Swindlehurst, ``Channel reciprocity attacks using intelligent surfaces with non-diagonal phase shifts,'' \emph{IEEE Open J. Commun. Soc.}, vol.~5, pp. 1469--1485, 2024.

\bibitem{ner22}
M.~Nerini, S.~Shen, and B.~Clerckx, ``Closed-form global optimization of beyond diagonal reconfigurable intelligent surfaces,'' \emph{IEEE Trans. Wireless Commun.}, vol.~23, no.~2, pp. 1037--1051, 2024.

\bibitem{fan24}
T.~Fang and Y.~Mao, ``A low-complexity beamforming design for beyond-diagonal {RIS} aided multi-user networks,'' \emph{IEEE Commun. Lett.}, vol.~28, no.~1, pp. 203--207, 2024.

\bibitem{gra21}
G.~Gradoni and M.~Di~Renzo, ``End-to-end mutual coupling aware communication model for reconfigurable intelligent surfaces: An electromagnetic-compliant approach based on mutual impedances,'' \emph{IEEE Wireless Commun. Lett.}, vol.~10, no.~5, pp. 938--942, 2021.

\bibitem{dir23}
M.~Di~Renzo, V.~Galdi, and G.~Castaldi, ``Modeling the mutual coupling of reconfigurable metasurfaces,'' in \emph{2023 17th European Conference on Antennas and Propagation (EuCAP)}, 2023.

\bibitem{ner23-2}
M.~Nerini, S.~Shen, H.~Li, M.~Di~Renzo, and B.~Clerckx, ``A universal framework for multiport network analysis of reconfigurable intelligent surfaces,'' \emph{IEEE Trans. Wireless Commun.}, 2024.

\bibitem{qia21}
X.~Qian and M.~Di~Renzo, ``Mutual coupling and unit cell aware optimization for reconfigurable intelligent surfaces,'' \emph{IEEE Wireless Commun. Lett.}, vol.~10, no.~6, pp. 1183--1187, 2021.

\bibitem{per23}
N.~S. Perović, L.-N. Tran, M.~Di~Renzo, and M.~F. Flanagan, ``Optimization of {RIS}-aided {SISO} systems based on a mutually coupled loaded wire dipole model,'' in \emph{2023 57th Asilomar Conference on Signals, Systems, and Computers}, 2023, pp. 145--150.

\bibitem{abr21}
A.~Abrardo, D.~Dardari, M.~Di~Renzo, and X.~Qian, ``{MIMO} interference channels assisted by reconfigurable intelligent surfaces: Mutual coupling aware sum-rate optimization based on a mutual impedance channel model,'' \emph{IEEE Wireless Commun. Lett.}, vol.~10, no.~12, pp. 2624--2628, 2021.

\bibitem{akr23}
M.~Akrout, F.~Bellili, A.~Mezghani, and J.~A. Nossek, ``Physically consistent models for intelligent reflective surface-assisted communications under mutual coupling and element size constraint,'' in \emph{2023 57th Asilomar Conference on Signals, Systems, and Computers}, 2023, pp. 1589--1594.

\bibitem{ma23}
R.~Ma, J.~Tang, X.~Zhang, K.-K. Wong, and J.~A. Chambers, ``Energy-efficiency optimization for mutual-coupling-aware wireless communication system based on {RIS}-enhanced {SWIPT},'' \emph{IEEE Internet Things J.}, vol.~10, no.~22, pp. 19\,399--19\,414, 2023.

\bibitem{wij24}
D.~Wijekoon, A.~Mezghani, G.~C. Alexandropoulos, and E.~Hossain, ``Electromagnetically-consistent modeling and optimization of mutual coupling in {RIS}-assisted multi-user {MIMO} communication systems,'' in \emph{2024 IEEE International Conference on Communications Workshops (ICC Workshops)}, 2024, pp. 1737--1742.

\bibitem{zhe24-1}
P.~Zheng, X.~Ma, and T.~Y. Al-Naffouri, ``On the impact of mutual coupling on {RIS}-assisted channel estimation,'' \emph{IEEE Wireless Commun. Lett.}, vol.~13, no.~5, pp. 1275--1279, 2024.

\bibitem{zhe24-3}
P.~Zheng, S.~Tarboush, H.~Sarieddeen, and T.~Y. Al-Naffouri, ``Mutual coupling-aware channel estimation and beamforming for {RIS}-assisted communications,'' \emph{arXiv preprint arXiv:2410.04110}, 2024.

\bibitem{pen24}
B.~Peng, K.-L. Besser, S.~Shen, F.~Siegismund-Poschmann, R.~Raghunath, D.~Mittleman, V.~Jamali, and E.~A. Jorswieck, ``{RISnet}: A domain-knowledge driven neural network architecture for {RIS} optimization with mutual coupling and partial {CSI},'' \emph{arXiv preprint arXiv:2403.04028}, 2024.

\bibitem{sem24}
D.~Semmler, J.~A. Nossek, M.~Joham, and W.~Utschick, ``Performance analysis of systems with coupled and decoupled {RISs},'' in \emph{2024 19th International Symposium on Wireless Communication Systems (ISWCS)}, 2024.

\bibitem{mur23}
P.~Mursia, S.~Phang, V.~Sciancalepore, G.~Gradoni, and M.~Di~Renzo, ``{SARIS}: Scattering aware reconfigurable intelligent surface model and optimization for complex propagation channels,'' \emph{IEEE Wireless Commun. Lett.}, vol.~12, no.~11, pp. 1921--1925, 2023.

\bibitem{has24}
H.~E. Hassani, X.~Qian, S.~Jeong, N.~S. Perović, M.~Di~Renzo, P.~Mursia, V.~Sciancalepore, and X.~Costa-Pérez, ``Optimization of {RIS}-aided {MIMO}—{A} mutually coupled loaded wire dipole model,'' \emph{IEEE Wireless Commun. Lett.}, vol.~13, no.~3, pp. 726--730, 2024.

\bibitem{pet23}
G.~Pettanice, R.~Valentini, P.~D. Marco, F.~Loreto, D.~Romano, F.~Santucci, D.~Spina, and G.~Antonini, ``Mutual coupling aware time-domain characterization and performance analysis of reconfigurable intelligent surfaces,'' \emph{IEEE Trans. Electromagn. Compat.}, vol.~65, no.~6, pp. 1606--1620, 2023.

\bibitem{zhe24-2}
P.~Zheng, R.~Wang, A.~Shamim, and T.~Y. Al-Naffouri, ``Mutual coupling in {RIS}-aided communication: Model training and experimental validation,'' \emph{IEEE Trans. Wireless Commun.}, 2024.

\bibitem{li24}
H.~Li, S.~Shen, M.~Nerini, M.~Di~Renzo, and B.~Clerckx, ``Beyond diagonal reconfigurable intelligent surfaces with mutual coupling: Modeling and optimization,'' \emph{IEEE Commun. Lett.}, vol.~28, no.~4, pp. 937--941, 2024.

\bibitem{poz11}
D.~M. Pozar, \emph{Microwave engineering}.\hskip 1em plus 0.5em minus 0.4em\relax John wiley \& sons, 2011.

\bibitem{ivr10}
M.~T. Ivrlač and J.~A. Nossek, ``Toward a circuit theory of communication,'' \emph{IEEE Trans. Circuits Syst. I: Regul. Pap.}, vol.~57, no.~7, pp. 1663--1683, 2010.

\bibitem{ngo17}
H.~Q. Ngo and E.~G. Larsson, ``No downlink pilots are needed in {TDD} massive {MIMO},'' \emph{IEEE Trans. Wireless Commun.}, vol.~16, no.~5, pp. 2921--2935, 2017.

\bibitem{ahl53}
L.~V. Ahlfors, \emph{Complex analysis: An introduction to the theory of analytic functions of one complex variable}.\hskip 1em plus 0.5em minus 0.4em\relax McGraw-Hill, 1953.

\end{thebibliography}

\end{document}